\newdimen\iwidth
\newdimen\iheight
\newcommand{\mc}[3]{\multicolumn{#1}{#2}{#3}}
\newtheorem{remark}{Remark}[section]
\definecolor{darkblue}{rgb}{0.08, 0.25, 0.68}
\renewcommand{\div}{\nabla\cdot}
\newcommand{\grad}{\nabla}
\newcommand{\bfC}{{\bf C}}
\newcommand{\bfF}{{\bf F}}
\newcommand{\bfU}{{\bf U}}
\newcommand{\bfb}{{\bf b}}
\newcommand{\bff}{{\bf f}}
\newcommand{\bfx}{ {\bf x}}
\newcommand{\bfv}{ {\bf v}}
\newcommand{\bfu}{{\bf u}}
\newcommand{\bra}{\left\langle}
\newcommand{\ket}{\right\rangle}
\newtheorem{theorem}{Theorem}[section]
\journal{Journal of \LaTeX\ Templates}
\begin{document}

\begin{frontmatter}

\title{A SIMPLE-Based Preconditioned Solver for the Direct-Forcing Immersed Boundary Method }

\author[mymainaddress]{Rachel Yovel}\ead{yovelr@bgu.ac.il}
\cortext[mycorrespondingauthor]{Corresponding author}
\author[mymainaddress]{Eran Treister}\ead{erant@cs.bgu.ac.il}
\author[yuriaddress]{Yuri Feldman\corref{mycorrespondingauthor}}\ead{yurifeld@bgu.ac.il}

\address[mymainaddress]{Department of Computer Sciences, Ben-Gurion University of the Negev, Beer-Sheva, Israel.}
\address[yuriaddress]{Department of Mechanical Engineering, Ben-Gurion University of the Negev, Beer-Sheva, Israel.}

\begin{abstract}
We present a robust and scalable solver for direct-forcing immersed boundary simulations, based on a preconditioned SIMPLE algorithm. 
The method applies block elimination to the pressure-force coupled system, and utilizes the discrete Laplacian operator as an efficient preconditioner for the resulting Schur complement.
We rigorously demonstrate the spectral equivalence between the Schur complement and the discrete Laplacian, ensuring convergence behavior that is independent of grid resolution and physical parameters. 
This enables accurate, stable, and efficient two-way coupled fluid-structure interaction (FSI) simulations with moving boundaries and significant added-mass effects. These simulations are all executable on standard computing platforms. 
Extensive validation and verification --- including simulations of oscillating, sedimenting, and buoyant spheres, as well as configurations involving multiple immersed bodies --- confirm the solver’s accuracy and efficiency across a broad range of FSI scenarios. 
The proposed approach introduces a novel and accessible framework for immersed boundary simulations requiring strong pressure–force coupling.
\end{abstract}

\begin{keyword}
Moving boundary simulations, Two way coupling FSI, Implicit immersed boundary method, Schur complement, Preconditioning
\end{keyword}

\end{frontmatter}


\section{Introduction}\label{sec:intro}
Immersed boundary methods (IBMs) provide a versatile computational framework for fluid-structure interaction (FSI) problems involving complex or moving geometries, without requiring mesh regeneration or body-fitted grids.
The coupled FSI problem is governed by fluid and structural equations, linked via interface constraints. These interface conditions  enforce displacement and velocity continuity and stress balance, constituting a system of equations coupling velocity, pressure and Lagrangian forces.
The numerical treatment of this system depends on the chosen solution strategy, broadly categorized as monolithic or partitioned. 
Monolithic methods solve all unknowns simultaneously, resulting in intrinsic coupling. In contrast, partitioned methods treat the fluid and structural problems within separate solver frameworks, exchanging information across the interface and enforcing coupling through explicit, semi-implicit, or fully implicit schemes.
Explicit schemes apply fluid and structural updates sequentially within each time step.
Semi-implicit methods introduce limited, typically one-way temporal coupling,  updating one subsystem  based on temporally lagged information from the other.
Fully implicit schemes employ two-way coupling and iterate both subsystems to convergence at each time step.
Explicit approaches can work well in steady or weakly coupled situations. However, they often fail when rapid transients occur, when the fluid and structure has comparable characteristic time scales, or when strong added-mass effects are present.
In these cases, quasi-static assumptions no longer hold. 
To maintain stability and accurately conserve linear and angular momentum, one must use either a monolithic formulation or a partitioned scheme with semi- or fully-implicit coupling.
Examples of such  configurations include  nearly neutrally buoyant particles \cite{tschisgale2018implicit}, stiff elastic membranes \cite{mori2008implicit},  deformable capsules or vesicles modeled as nonlinear membranes \cite{le2009implicit},
 fluid--particle interaction during binder jet 3D printing \cite{wagner2024coupled}, and fluid-structure piezoelectric interaction \cite{li2023numerical}.
For a detailed overview of existing coupling schemes and their applications in incompressible FSI, see the comprehensive review  \cite{fernandez2011coupling}.

Monolithic formulations offer strong coupling and numerical robustness, 
ensuring consistent enforcement of all constraints. 
These include algebraic preconditioning of the Newton iteration applied to large displacement FSI configurations \cite{muddle2012efficient}, direct inversion of an extended Helmholtz operator via LU decomposition \cite{feldman2016extension}, Schur complement reduction with FFT-based Laplacian inversion under periodic boundary conditions \cite{stein2016immersed, stein2017immersed}, Schur complement-based solution of the Stokes system accelerated by geometric multigrid preconditioning \cite{kallemov2016immersed}, and stream-function-vorticity formulations with efficiently precomputed fluid-structure coupling operators \cite{nair2022strongly}.
Despite their robustness, monolithic methods are often computationally expensive due to the need to solve large, indefinite, and ill-conditioned saddle-point systems.
An additional drawback is that the numerical methodology must be developed essentially from scratch for each specific problem, which prevents reuse of legacy codes previously developed for fluid and structural solvers.

These limitations have contributed to the growing popularity of partitioned formulations over recent decades, which offer a practical alternative with greater modularity and scalability across diverse FSI applications. 
Partitioned formulations of IBMs, especially those based on the direct-forcing approach \cite{mohd1997simulations, fadlun2000combined}, have shown particular effectiveness in incompressible flow simulations when employing semi-implicit or fully implicit coupling. 
These methods avoid assumptions about structural dynamics and instead enforce interfacial constraints directly through the momentum equations. 
The constraints are imposed by introducing Lagrangian force terms into the momentum equations to match the desired boundary velocities. 
Pressure and interface force fields act as distributed Lagrange multipliers (DLM), intrinsically coupled to each other and to the velocity field through the governing equations. 
Several works are worth mentioning in this context:
in \cite{wu2009implicit}, it is proposed to combine all unknown Lagrangian forces into a single, fully coupled system to better reflect the parabolic nature of the Navier–Stokes equations.
The works \cite{vanella2009moving, posa2017adaptive}
employ moving-least-squares reconstruction and isoparametric mapping to improve overall accuracy and robustness.
The works \cite{yildiran2024pressure, farah2024improved} aim to achieve high-order accuracy in enforcing the no-slip constraint, and in \cite{koponen2025direct},  Dirichlet and Robin boundary conditions are implemented with high accuracy in sharp corner geometries.

The immersed boundary projection method (IBPM), originally introduced in \cite{taira2007immersed}, has emerged as a particularly influential development within the family of partitioned direct-forcing immersed boundary formulations.
It couples pressure and Lagrangian forces --- both serving as Lagrange multipliers --- within a unified projection framework and supports both semi-implicit and fully implicit \cite{le2008implicit} implementations.
Over the past decade, IBPM has been effectively applied to simulate FSI involving rigid bodies, including sedimentation and particulate flows in two and three dimensions \cite{wang2015strongly, lacis2016stable, ong2022immersed}, as well as biologically inspired problems such as insect flight \cite{li2016efficient}, vesicle dynamics over a wide range of Reynolds numbers \cite{ong2020immersed, ong2021immersed}, and thermal convection involving heat transfer \cite{xu2023efficient}. More recently, integration with SIMPLE \cite{goncharuk2023immersed} and PIMPLE \cite{constant2017immersed, askarishahi2023immersed, farah2024improved} algorithms has considerably extended the method’s applicability, enabling robust simulations of stationary and arbitrarily moving geometries within widely used CFD platforms such as OpenFOAM.

While algorithmic advancements have broadened the applicability of immersed boundary methods, rigorous theoretical analysis remains limited, particularly for direct-forcing approaches. 
Most formal proofs of well-posedness and numerical stability have been established within the framework of variational formulations and finite element methods, typically tailored to fictitious domain methods, as demonstrated in the recent studies \cite{boffi2022existence,decastro2025lagrange} and references therein. 
Related efforts for continuous forcing formulation using finite-difference schemes have focused on proving the unconditional stability of backward Euler and Crank–Nicolson-like discretizations of the nonlinear immersed boundary terms \cite{newren2007unconditionally}, and on developing projection-based preconditioners to accelerate semi-implicit methods through Schur complement reduction \cite{zhang2014projection}. 
Rigorous analysis of direct-forcing IBMs, 
particularly for methods based on finite volume or finite difference discretizations that use distributed Lagrange multipliers (DLMs) to enforce no-slip constraints is especially scarce.
To the best of our knowledge, the only such effort is the preconditioned implicit direct-forcing (PIDF) method  \cite{park2016preconditioned}, which improves no-slip enforcement through iterative force computation. 
However, this method relies on empirically tuned parameters, which limits its generality.
 
The current work contributes to bridging this gap by introducing a new preconditioning strategy for a SIMPLE-based direct-forcing IBM. 
The method solves a regularized saddle-point system to simultaneously update pressure and Lagrangian force corrections, ensuring incompressibility and enforcement of the no-slip constraint.
This saddle-point system is first reduced via a primal Schur complement which is in turn preconditioned by a Laplacian operator.
Building on this formulation, we prove rigorously that the Laplacian is spectrally equivalent to the Schur complement, enabling an efficient and robust preconditioning strategy, and demonstrate numerically that the number of Krylov iterations remains low and  independent of grid resolution and problem parameters. 
We demonstrate the applicability of the method to both moving boundary and two-way coupled FSI problems. 
First, the methodology is verified through simulations of flows induced by a transversely oscillating sphere.
Second, the approach is challenged with more complex configurations involving multiple moving bodies, specifically porous spheres modeled as arrays of rigid sub-spheres. The simulations reveal distinct flow features associated with different array porosities, including variations in drag coefficients, phase shifts, and vortex evolution.
Then we demonstrate the applicability of the approach to two-way FSI simulations, such as sedimenting and buoyant spherical particles.
In all of the above mentioned configurations, the method shows good agreement with experimental and numerical reference data while preserving numerical stability and accuracy. 
Finally, we demonstrate that the method achieves close to linear and sub-linear scalability in computational time and memory usage, respectively, while maintaining notably low memory requirements, making it well-suited for realistic FSI simulations on standard workstations.

\section{Theoretical background}

\subsection{ Governing equations for the dynamics of a spherical particle in a Newtonian flow. }
In compliance with the direct-forcing IBM formulation, the incompressible flow around a solid body is governed by the non-dimensional Navier–Stokes (NS) and continuity equations, supplemented by the kinematic no-slip constraint on the body's surface:
\begin{align}
\label{eq:momentum}
& \frac{\partial \bfu}{\partial t} + (\bfu \cdot \grad) \bfu = -\grad p +\frac{1}{Re} \nabla^2 \bfu +\int_S \mathbf{F}(\boldsymbol{\mathcal{L}})\delta(\boldsymbol{\mathcal{L}}-\mathbf{x})dV_s && \forall \mathbf{x} \in\Omega, \\
\label{eq:divfree_continuous}
& \div \bfu = 0 && \forall \mathbf{x} \in\Omega, \\
\label{eq:kinematic_constraint}
& \bfU (\boldsymbol{\mathcal{L}})=\int_\Omega \bfu(\bfx)\delta(\bfx-\boldsymbol{\mathcal{L}})dV = \bfu_s+\boldsymbol{\omega}_s \times \boldsymbol{\widetilde{\mathcal{L}}} &&
\end{align}
where $\mathbf{u}$ and $p$ are the velocity and pressure fields, $\boldsymbol{\mathcal{L}}$ is the series of Lagrangian points determining the surface $S$ of the immersed body, 
and $\mathbf{x}$ stores the locations of the structured staggered Eulerian grid that hosts the discrete values of $\mathbf{u}$ and $p$ (see Fig. \ref{fig:cell}).
$\mathbf{F}$ is the Lagrangian force density  defined within the shell unit volume $dV_s$ surrounding each Lagrangian point, $\bfU$ is a velocity of Lagrangian point, $\delta$ is the Dirac delta function, and $\mathbf{u}_s$ and $\boldsymbol{\omega}_s$ are the translational and rotational velocities of the rigid body, respectively. $\boldsymbol{\widetilde{\mathcal{L}}}$ is the radius vector connecting the centroid of the rigid body with the surface point $\boldsymbol{\mathcal{L}}$, $dV$ is the unit volume of an Eulerian cell, and $\Omega$ is the whole computational domain occupied by the fluid. The normalized Eqs.~\eqref{eq:momentum}--\eqref{eq:kinematic_constraint} are governed by a single non-dimensional parameter: the Reynolds number, $Re = U_0D/\nu$ where $\nu$ is the kinematic viscosity and $D$ and $U_0$ are the characteristic length and velocity, respectively. The time, pressure, and force density are normalized by utilizing the scales $D/U_0$, $\rho_f U_0^2$, and $\rho_f U_0^2 /D$, respectively, where $\rho_f$ is the fluid density.

\begin{figure}
\centering
\includegraphics[width=0.67\textwidth]{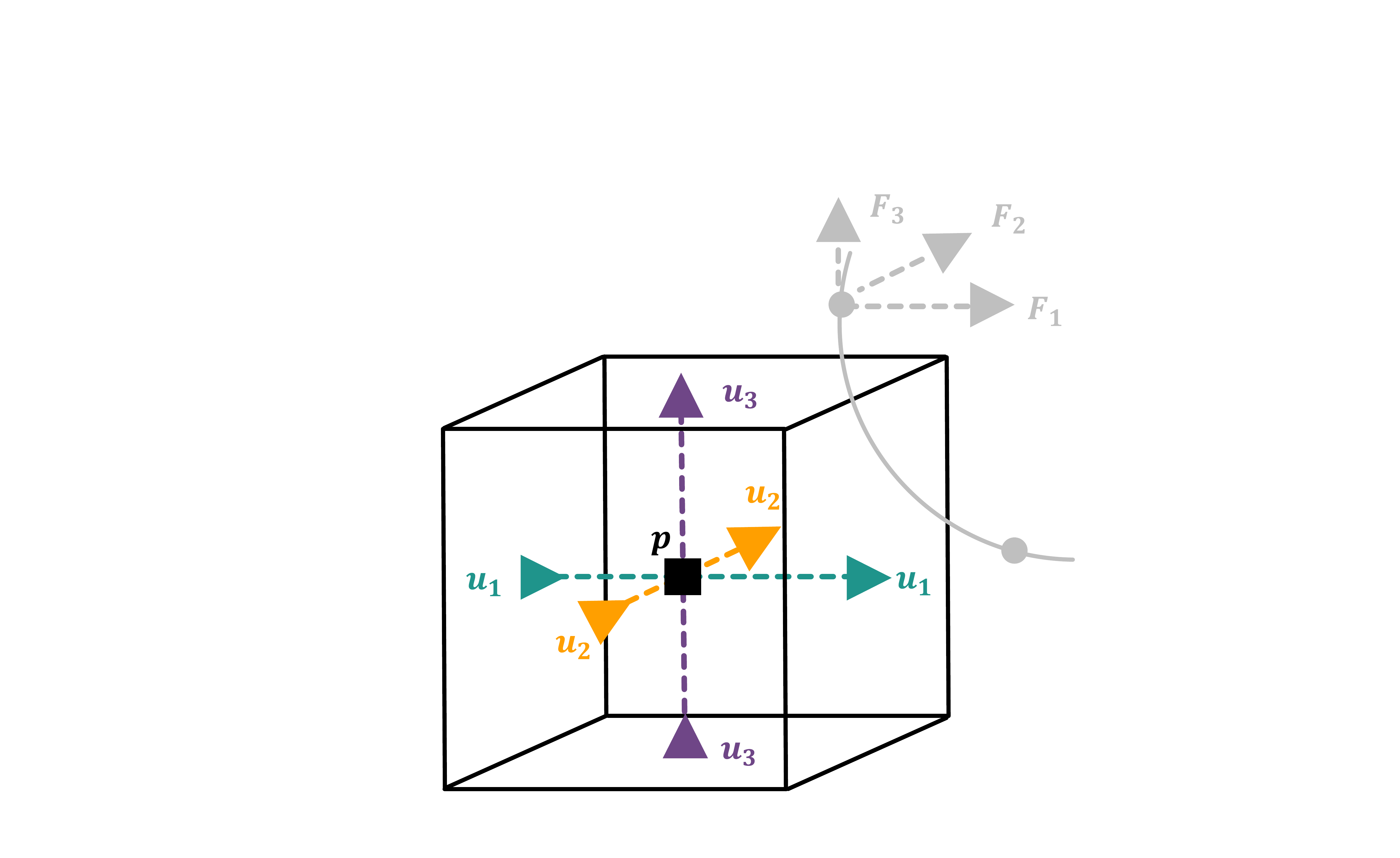}
\caption{An Eulerian discretization cell in 3D and two adjacent Lagrangian points on the surface of the immersed body.}
\label{fig:cell}
\end{figure}

Note that the last term of Eq.~\eqref{eq:momentum} constitutes the dynamic constraint, which defines the Lagrangian force density by which the immersed body acts on the surrounding flow, while Eq.~\eqref{eq:kinematic_constraint} represents the kinematic no-slip constraint at the surface of the immersed body. If the kinematics of the immersed body is prescribed (e.g., the body moves with constant velocity or undergoes transverse oscillations), the configuration corresponds to a one-way coupled FSI problem. In this scenario, the system defined by Eqs.~\eqref{eq:momentum}--\eqref{eq:kinematic_constraint}, complemented by appropriate boundary conditions, is mathematically closed and can be solved throughout the computational domain $\Omega$\footnote{Note that according to the direct-forcing IBM formulation, the flow is also computed inside the immersed body. However, for a rigid body, this internal flow field does not have physical significance.}. In the case of two-way coupled FSI, the system of Eqs.~\eqref{eq:momentum}--\eqref{eq:kinematic_constraint} must be solved simultaneously with Newton’s laws governing the dynamics of the solid body. These non-dimensional equations, obtained by normalizing the moment-of-inertia tensor, time, and force density using $\rho_f D^5$, $D/U_0$, and $\rho_f U_0^2/D$, scales respectively, and formulated within the immersed boundary framework as suggested in \cite{uhlmann2005immersed}, are written for a spherical particle as:
\begin{align}
\label{eq:linear_momentum_sb_norm}
& \dot{\mathbf{u}}_s = \frac{6}{\pi\theta} \left[-\int_S \mathbf{F}(\boldsymbol{\mathcal{L}})\, dV_s + \frac{d}{dt} \int_{V_p} \mathbf{u}\, dV \right] - \frac{\theta - 1}{\theta} Fr^{-2} \mathbf{e}_z, \\[6pt]
\label{eq:angular_momentum_sb_norm}
& \dot{\boldsymbol{\omega}}_s = \frac{60}{\pi\theta} \left[-\int_S \boldsymbol{\widetilde{\mathcal{L}}} \times \mathbf{F}(\boldsymbol{\mathcal{L}})\, dV_s + \frac{d}{dt} \int_{V_p} \mathbf{x} \times \mathbf{u}\, dV \right].
\end{align}
Here, the parameter $\theta = \rho_p / \rho_f$ is the ratio of particle to fluid density, and $Fr = gD / U_0^2$ is the Froude number.
Importantly, for a spherical particle, the moment-of-inertia tensor reduces to a constant scalar, which allows Eqs.~\eqref{eq:linear_momentum_sb_norm}--\eqref{eq:angular_momentum_sb_norm} to be conveniently formulated and solved in the inertial reference frame. However, for a general (non-spherical) particles, it is common to formulate the conservation of angular momentum in a body-attached reference frame. In this case, the coordinate system is typically aligned with the principal axes of the body, for which the moment-of-inertia tensor becomes a constant diagonal matrix. The angular velocity vector $\boldsymbol{\omega}_{cs}$ is then expressed in this body-fixed coordinate system. The transformation between the body-fixed and inertial frames is performed by utilizing rotation matrix which should be updated every time step as detailed in the next section.

\subsection{ Updating the rotation matrix. }
The rotation matrix $\mathbf{R}^{(n+1)}$ is updated using the Rodriguez rotation formula based on the angular velocity vector. At each time step $n$, the angular velocity vector expressed in the body-fixed (co-rotating) reference frame, $\boldsymbol{\omega}_{cs}^{(n)}$, is first normalized to obtain the direction of the rotation axis:
\begin{align}
\label{eq:rotat_axis}
\hat{\boldsymbol{\omega}}_{cs}^{(n)} = \frac{\boldsymbol{\omega}_{cs}^{(n)}}{\|\boldsymbol{\omega}_{cs}^{(n)}\|}.
\end{align}
The rotation angle $\gamma$ over the time interval $\Delta t$ is computed using a second-order Adams--Bashforth scheme:
\begin{align}
\label{eq:theta}
\gamma = \Delta t \left( \frac{3}{2} \|\boldsymbol{\omega}_{cs}^{(n)}\| - \frac{1}{2} \|\boldsymbol{\omega}_{cs}^{(n-1)}\| \right).
\end{align}
A skew-symmetric matrix $\mathbf{W}$ is then constructed from the unit rotation axis $\hat{\boldsymbol{\omega}}_{cs}^{(n)}$:
\begin{align}
\label{eq:Rotat_Matrix}
\mathbf{W} =
\begin{bmatrix}
0 & -\hat{\omega}_{csz}^{(n)} & \hat{\omega}_{csy}^{(n)} \\
\hat{\omega}_{csz}^{(n)} & 0 & -\hat{\omega}_{csx}^{(n)} \\
-\hat{\omega}_{csy}^{(n)} & \hat{\omega}_{csx}^{(n)} & 0
\end{bmatrix},
\end{align}
and the incremental rotation matrix is computed using the Rodrigues formula:
\begin{align}
\label{eq:incement_matr}
\mathbf{R}_\Delta = I_m + \sin\gamma\, \mathbf{W} + (1 - \cos\gamma)\, \mathbf{W}^2,
\end{align}
where $I_m$ is the identity matrix. Finally, the updated rotation matrix is obtained by composing the previous orientation $\mathbf{R}^{(n)}$ with the incremental rotation:
\begin{align}
\label{eq:next_time_step_matr}
\mathbf{R}^{(n+1)} = \mathbf{R}^{(n)} \mathbf{R}_\Delta.
\end{align}
The obtained rotation matrix \( \mathbf{R}^{(n+1)} \) is orthonormal. To express any vector given in the body-attached reference frame in the inertial frame, one multiplies this matrix by the vector. Conversely, multiplying the transpose of \( \mathbf{R}^{(n+1)} \) by any vector expressed in the inertial reference frame transforms it into the coordinates of the body-attached frame.
In particular, our current strategy --- which is also applicable to non-spherical particles --- is to compute all flow variables, Lagrangian force densities, and the translational velocity of the immersed body in the inertial frame, while computing the angular velocity of the immersed body in the body-attached reference frame. The latter is achieved by first transforming the right-hand side of Eq.~\eqref{eq:angular_momentum_sb_norm} into the body-attached reference frame and solving it for \( \boldsymbol{\omega}_{cs}^{(n)} \). Once \( \boldsymbol{\omega}_{cs}^{(n)} \) is computed, it is first used to calculate the rotation matrix \( \mathbf{R}^{(n+1)} \) via Eqs.~\eqref{eq:rotat_axis}--\eqref{eq:incement_matr}, and then transformed back to the inertial frame to evaluate the angular velocity \( \boldsymbol{\omega}_{s} \). This angular velocity is used to compute the velocity of each Lagrangian point \( \mathbf{u}(\boldsymbol{\mathcal{L}}) \) using the kinematic constraint defined in Eq.~\eqref{eq:kinematic_constraint}. The position of the body is updated by adding its linear translation to the angular displacement, which is obtained by multiplying \( \mathbf{R}^{(n+1)} \) with the initial position vector \( \mathbf{x}^{(0)} \). More details regarding the discretization procedure and algorithmic sequence are provided in the following sections.

\subsection{ Interpolation and regularization operators. }
Since the Eulerian and Lagrangian grids do not necessarily coincide, two adjoint operators are need to be introduced to  facilitate the data transfer between them: the regularization operator $R$, which smears the Lagrangian forces onto the underlying  Eulerian grid, and the interpolation operator $R^T$, which interpolates the Eulerian velocities to Lagrangian points. These operators are defined by utilizing the following discrete Dirac delta function, proposed in  \cite{roma1999adaptive}:
\begin{equation}
\delta(r) =
\begin{cases}
\frac{1}{6\Delta r} \left(5-3\frac{|r|}{\Delta r} - \sqrt{-3\left(1-\frac{|r|}{\Delta r}\right)^2 + 1} \right) & \text{ for } 0.5\Delta r < |r| \leq 1.5 \Delta r, \\[1.5em]
\frac{1}{3\Delta r} \left(1+\sqrt{-3\left(\frac{|r|}{\Delta r}\right)^2 + 1} \right) & \text{ for } |r|\leq 0.5\Delta r, \\[1.5em]
0 & \text{ otherwise, }
\end{cases}
\label{eq:delta}
\end{equation}
where $r$ represents the distance from the smeared point and $\Delta r$ represents the cell width in the $r$ direction.
This discrete Dirac delta function gained popularity over the years \cite{uhlmann2005immersed, kempe2012improved, taira2007immersed, goncharuk2023immersed},
due to its compact support of at most three discretization cells in each direction.\footnote{Since the discrete Dirac delta function in Eq. \eqref{eq:delta} provides only a first order of spatial accuracy, other discrete delta functions can be also considered to improve the accuracy (with the price of less favorable sparsity), see \cite{stein2017immersed}.} The regularization and interpolation operators are defined using the discrete Dirac delta function from Eq.~\eqref{eq:delta} as follows:
\begin{align}
\label{eq:interpolation}
\int_\Omega \mathbf{u}(\mathbf{x}) \delta(\mathbf{x} - \boldsymbol{\mathcal{L}})\, dV &= (R^T \mathbf{u})_k \approx \Delta x^3 \sum_{i} \mathbf{u}_i\, \delta^3(\mathbf{X}_k - \mathbf{x}_i), \\
\label{eq:regularization}
\int_S \mathbf{F}(\boldsymbol{\mathcal{L}}) \delta(\boldsymbol{\mathcal{L}} - \mathbf{x})\, dV_s &= (R \mathbf{F})_i \approx \Delta x^3 \sum_{k} \mathbf{F}_k\, \delta^3(\mathbf{x}_i - \mathbf{X}_k),
\end{align}
where \( \mathbf{x}_i \) denotes the coordinates of the \( i \)-th velocity location, and \( \mathbf{X}_k \) denotes the \( k \)-th Lagrangian point.
These discrete summation formulas approximate the integrals under the assumption that the immersed surface has sufficiently smooth local curvature and that the Lagrangian points are uniformly distributed, with spacing comparable to that of the underlying uniform Eulerian grid. The notation \( \delta^3 \) refers to the three-dimensional discrete Dirac delta function. As long as the same delta function is used for both interpolation and regularization, the two operators are transposes of each other in matrix representation. For notational convenience, we denote the regularization operator by \( R \) and the interpolation operator by \( R^T \).

\section{Method: time stepping and the pressure--force-density corrections system}\label{sec: Implement}
\subsection{ One-way coupled moving boundary}
We begin with a brief overview of the SIMPLE-IBM approach introduced in our recent work \cite{goncharuk2023immersed} by expressing the governing Eqs.~\eqref{eq:momentum}-- \eqref{eq:kinematic_constraint} in a semi-discrete form:
\begin{align}
& \frac{3\bfu^{*} - 4\bfu^{(n)}+\bfu^{(n-1)}}{2\Delta t} = -\grad p^{(n)}+\frac{1}{Re}\nabla^2 \bfu^{*} + R\bfF^{(n)} -\left(\bfu\cdot\grad\bfu\right)^{(n)},  \label{eq:ustar}
\\[0.5em]
& \frac{3\bfu'}{2\Delta t} = -\grad p' + R\bfF',  \label{eq:utag}
\\[0.5em]
& R^T\bfu^{(n+1)}= \bfU^\Gamma, \label{eq:unplus1}
\\[0.5em]
& \div\bfu^{(n+1)} = 0,
\label{eq:divfree}
\end{align}
where $R$ is the regularization operator defined in Eq.~\eqref{eq:regularization}, which smears the forces from the Lagrangian points to the adjacent Eulerian velocity locations; $R^T$ is its adjoint interpolation operator, defined in Eq.~\eqref{eq:interpolation}, which interpolates the intermediate velocity field onto the corresponding Lagrangian points; and \( \mathbf{U}^\Gamma \) is the prescribed velocity value on the surface of the immersed body. Here, $\mathbf{u}^*$ is an intermediate velocity field, which is subsequently used to compute the corrections to the pressure $p'$, velocity $\mathbf{u}'$, and Lagrangian force-density field $\mathbf{F}'$. The final stage of the algorithm consists of a correction step,
$p^{(n+1)} = p^{(n)} + p'$ and $\mathbf{F}^{(n+1)} = \mathbf{F}^{(n)} + \mathbf{F}'$, followed by a projection step, $\mathbf{u}^{(n+1)} = \mathbf{u}^* + \mathbf{u}'$, which yields a divergence-free velocity field that also satisfies the kinematic no-slip constraint on the surface of the immersed body. As already mentioned, in the case of one-way moving boundary coupling, the body kinematics is prescribed a priori (i.e., the body position $\mathbf{x}^{(n+1)}$ and velocity $\mathbf{U}^\Gamma$ are known from the kinematic law), and the above system is closed and can be solved subject to appropriate boundary conditions. A standard finite volume central-difference discretization was applied to all the spatial terms, while a second-order backward difference was utilized for discretization of all the temporal derivatives.

When proceeding with the solution, the intermediate velocity field $\bfu^*$ is first obtained from the solution of Eq.~\eqref{eq:ustar}. We note that this velocity field does not necessarily satisfy the no-slip and divergence-free constraints. Second, a divergence is applied to both sides of Eq.~\eqref{eq:utag}, similarly to the original SIMPLE method~\cite{patankar1983calculation}, yielding the modified pressure Poisson equation:
\begin{equation}\label{eq:modified_poisson}
-\nabla^2 p' + \div (R \bfF') = -\frac{3}{2\Delta t} \div \bfu^*,
\end{equation}
where the right-hand side is obtained by utilizing the divergence-free constraint in Eq.~\eqref{eq:divfree}, which implies $\div \bfu^* = -\div \bfu'$. However, Eq.~\eqref{eq:modified_poisson} for the pressure correction is not closed, as it also includes the force correction field. Thus, using Eq.~\eqref{eq:utag} again, we obtain:
\begin{equation}\label{eq:u_n_plus_1}
\bfu^{(n+1)} = \bfu^* + \frac{2}{3} \Delta t (-\grad p' + R\bfF').
\end{equation}
By applying the interpolation operator $R^T$ to both sides of Eq.~\eqref{eq:u_n_plus_1} and substituting the result into Eq.~\eqref{eq:unplus1}, we obtain the following system of equations governing the corrections for the pressure and force density fields:
\begin{equation}\label{eq:Saddle}
\begin{bmatrix}
L & B^T\\
B & -C
\end{bmatrix}
\begin{bmatrix}
    p' \\
    \bfF'
\end{bmatrix}
=
\begin{bmatrix}
G^T G & G^T R \\
R^T G & -R^T R
\end{bmatrix}
\begin{bmatrix}
    p' \\
    \bfF'
\end{bmatrix}
=
\begin{bmatrix}
    RHS_{p'} \\
    RHS_{\bfF'}
\end{bmatrix}
\end{equation}
where $G$ is the discretized gradient operator, $RHS_{p'} = -\frac{3}{2\Delta t} \div \bfu^*$, and $RHS_{\bfF'} = \frac{3}{2\Delta t}(R^T \bfu^* - \bfU^\Gamma)$. 
Once the system in Eq.~\eqref{eq:Saddle} is solved, we update the velocity, pressure, and Lagrangian force density fields and proceed to the next time step.\footnote{In principle, internal fixed-point iterations may be applied to improve accuracy. However, for one-way coupling configurations and a sufficiently small time step ($\Delta t = 10^{-3}$), the desired accuracy is achieved in a single step~\cite{goncharuk2023immersed}.} 
The system of equations in Eq.~\eqref{eq:Saddle} constitutes a large, regularized saddle-point problem, whose efficient solution is a major bottleneck of the method described for one-way moving boundary, and in the next subsection we derive a similar equation that constitutes the bottleneck for two-way FSI configurations as well. 
The efficient solution of this saddle-point problem is a core contribution of the present study and is addressed in detail in the following sections.

\subsection{Two-way coupled FSI}
The conceptual difference between one-way coupled moving boundary simulations and two-way coupled FSI lies in the fact that the latter requires coupling the NS equations, governing the fluid dynamics, with the structural equations governing the dynamics of the solid body — specifically, Eqs.~\eqref{eq:linear_momentum_sb_norm} and \eqref{eq:angular_momentum_sb_norm}, which represent the conservation of linear and angular momentum of the solid particle. Remarkably, both equations include contributions from fluid inertia, commonly referred to as the added mass effect, which accounts for the rate of change of linear and angular momentum of the fluid enclosed within the volume occupied by the immersed body~\cite{uhlmann2005immersed}. These contributions appear as volume integrals over the domain occupied by the rigid body and must be evaluated accurately to ensure physical fidelity of the simulation. In the current study, we adopt the voxel-based strategy proposed by \cite{kempe2012improved}, wherein the particle surface is approximated using a Cartesian grid. In this approach, each Eulerian cell is assigned a partial volume fraction $\alpha_{i}^{(n)} \in [0,1]$ that represents the fraction of the cell lying within the particle at time $t^{(n)}$. The computation of $\alpha_i^{(n)}$ follows the linear reconstruction scheme introduced in~\cite{kempe2012improved}, which uses the signed distance function to approximate the volume fraction of the particle within each cell. Accordingly, the volume integrals over the particle domain, appearing in the linear and angular momentum balance equations, are approximated as:
\begin{subequations}
\label{eq:volume_integrals}
\begin{align}
&\int_{V_p^{(n)}} \mathbf{u}^{(n)} \, dV \approx \sum_{i \in V_p^{(n)}} \alpha_i \mathbf{u}_i^{(n)} \, \Delta V, \\[0.5em]
&\int_{V_p^{(n)}} \mathbf{x}^{(n)} \times \mathbf{u}^{(n)} \, dV \approx \sum_{i \in V_p^{(n)}} \alpha_i \mathbf{x}_i^{(n)} \times \mathbf{u}_i^{(n)} \, \Delta V.
\end{align}
\end{subequations}

\subsubsection{Iterative Coupling Algorithm}\label{sec:internal_iterations}
As already mentioned, our method belongs to the family of partitioned approaches; that is, the fluid and structural governing equations are solved separately and coupled through internal iterations at each time step to achieve strong fluid-structure interaction. At each time step, we employ an iterative predictor-corrector scheme. The algorithm, inspired by~\cite{wang2015strongly}, is tailored specifically to our immersed boundary framework and proceeds as follows:

\paragraph{Step 1: Explicit prediction of particle position and orientation}
The particle's new position and orientation are explicitly predicted with second-order accuracy in time using a backward differentiation formula:
\begin{equation}
\mathbf{x}^{(n+1)} =
\frac{1}{3}(4\mathbf{x}^{(n)} - \mathbf{x}^{(n-1)}) + \frac{2}{3}\Delta t\, \mathbf{u}_s^{(n)} + \mathbf{R}^{(n+1)} \mathbf{x}^{(0)}.
\label{eq:particle_position}
\end{equation}
These quantities remain fixed throughout the subsequent inner iterations to maintain computational efficiency.

\paragraph{Step 2: Eulerian velocity prediction}
We solve the discretized Navier–Stokes equations once to obtain an initial prediction, $\mathbf{u}^{(k_0)}$, for the Eulerian velocity field:
\begin{equation}
\frac{3\mathbf{u}^{(k_0)} - 4\mathbf{u}^{(n)} + \mathbf{u}^{(n-1)}}{2\Delta t} = -\nabla p^{(n)} + \frac{1}{Re} \nabla^2 \mathbf{u}^{(k_0)} + R\mathbf{F}^{(n)} - (\mathbf{u} \cdot \nabla \mathbf{u})^{(n)}.
\label{eq:ustar_two_way}
\end{equation}
At this stage, incompressibility and no-slip conditions are not yet enforced.

\paragraph{Step 3: Iterative correction procedure}
An iterative correction loop indexed by \(k \geq 0\) is initiated:

\begin{itemize}
\item \textbf{Translational velocity update:}
\begin{align}
&\mathbf{u}_s^{(k)} = \frac{1}{3}(4\mathbf{u}_s^{(n)} - \mathbf{u}_s^{(n-1)})
- \frac{4\Delta t}{\pi\theta} \int_S \mathbf{F}^{(n,k)}(\boldsymbol{\mathcal{L}}^{(n+1)})\, dV_s \nonumber \\[0.5em]
&\quad + \frac{2}{\pi\theta}\left(
3\int_{V_p^{(n+1)}} \mathbf{u}^{(k)}\, dV
- 4 \int_{V_p^{(n)}} \mathbf{u}^{(n)}\, dV
+ \int_{V_p^{(n-1)}} \mathbf{u}^{(n-1)}\, dV
\right) \nonumber \\[0.5em]
&\quad - \frac{2\Delta t}{3} \cdot \frac{\theta - 1}{\theta} Fr^{-2} \mathbf{e}_z.
\label{eq:bdy_linear_vel_iter}
\end{align}
Note that this discrete form of Eq.~\eqref{eq:linear_momentum_sb_norm}, which governs the conservation of linear momentum of the solid body, is obtained here for the spherical particle  by discretizing all temporal derivatives, using a second-order backward finite difference.

\item \textbf{Rotational velocity update:}
\begin{align}
&\boldsymbol{\omega}_s^{(k)} =
\frac{1}{3} \left( 4 \boldsymbol{\omega}_s^{(n)} - \boldsymbol{\omega}_s^{(n-1)} \right)
- \frac{40\Delta t}{\pi\theta} \int_S \widetilde{\boldsymbol{\mathcal{L}}}^{(n+1)} \times \mathbf{F}^{(n,k)}(\boldsymbol{\mathcal{L}})\, dV_s \nonumber \\[0.5em]
&\quad + \frac{20}{\pi\theta} \left(
3\int_{V_p^{(n+1)}} \mathbf{x}^{(n+1)} \times \mathbf{u}^{(k)}\, dV
- 4\int_{V_p^{(n)}} \mathbf{x}^{(n)} \times \mathbf{u}^{(n)}\, dV
+  \int_{V_p^{(n-1)}} \mathbf{x}^{(n-1)} \times \mathbf{u}^{(n-1)}\, dV
\right).
\label{eq:bdy_angular_vel_iter}
\end{align}
Similarly, this discrete form of Eq.~\eqref{eq:angular_momentum_sb_norm}, which governs the conservation of angular momentum of the solid body, is obtained here for the spherical particle  by discretizing all temporal derivatives, using a second-order backward finite difference.

\item \textbf{Lagrangian boundary velocity evaluation:}
This step is required for updating the RHS of the system in Eq.~\eqref{eq:pressure_force_iterative_system}.
\begin{equation}
\mathbf{U}^{(k)}(\boldsymbol{\mathcal{L}}^{(n+1)}) = \mathbf{u}_s^{(k)} + \boldsymbol{\omega}_s^{(k)} \times \widetilde{\boldsymbol{\mathcal{L}}}^{(n+1)}.
\label{eq:kinematic_constraint_iterative}
\end{equation}

\item \textbf{Pressure and force-density correction:}
The coupled pressure and velocity corrections are computed at each iteration. As the iteration converges, the RHS of the system below approaches zero, and so do the correction values, since the operator is not singular.
\begin{equation}
\begin{bmatrix}
G^T G & G^T R \\[0.5em]
R^T G & -R^T R
\end{bmatrix}
\begin{bmatrix}
p' \\[0.5em]
\mathbf{F}'
\end{bmatrix}
=
\begin{bmatrix}
-\frac{3}{2\Delta t} \nabla \cdot \mathbf{u}^{(k)} \\[0.5em]
\frac{3}{2\Delta t} (R^T \mathbf{u}^{(k)} - \mathbf{U}^{(k)})
\end{bmatrix}.
\label{eq:pressure_force_iterative_system}
\end{equation}

\item \textbf{Fields update and convergence check:}
This is essentially a Gauss--Seidel step with $\xi$ acting as an under- or over-relaxation parameter. Our numerical experiments revealed that $\xi$ can range between 0.6 for the smallest  and 1.1 for the largest values of $\rho_p/\rho_f$ ratio.
\begin{equation}
\begin{aligned}
&p^{(n,k+1)} = p^{(n,k)} + \xi p', \\
&\mathbf{F}^{(n,k+1)} = \mathbf{F}^{(n,k)} + \xi \mathbf{F}', \\
&\mathbf{u}^{(k+1)} = \mathbf{u}^{(k)} + \xi\frac{2\Delta t}{3} \left( -\nabla p' + R\mathbf{F}' \right).
\end{aligned}
\end{equation}
Iterations (i.e., $Step~3$) continue until the relative \(L_1\)-norm of the incremental force-density correction falls below a specified threshold (typically \(10^{-3}\)), ensuring accurate enforcement of divergence-free and no-slip conditions:
\begin{equation}
\frac{\|\mathbf{F}'\|_1}{\|\mathbf{F}^{(n,k+1)}\|_1} < 10^{-3}.
\label{eq:convergence_criterion}
\end{equation}
Upon convergence, we proceed to the subsequent time step. 
\end{itemize}

This iterative scheme ensures consistent enforcement of the incompressibility and no-slip constraints while maintaining modularity between the fluid and solid solvers.
Similarly to the one-way moving boundary case, the system \eqref{eq:pressure_force_iterative_system} constitutes the bottleneck of the entire method and our main contribution is the preconditioning approach to address it, presented in the next section.

\section{Method: preconditioning}

To motivate the need of developing an efficient preconditioner, we first explain why the main computational bottleneck in our framework is the repeated solution of the regularized saddle-point linear system coupling the pressure and force-density corrections.
First, we explain the computational framework of our solver.
As described in the previous section, the fluid and structural dynamics equation are solved separately, and then we either advance directly to the next time step (for one-way coupled simulations) or perform fixed-point internal iterations (for two-way coupled simulations). 
\begin{enumerate}
\item \textbf{Fluid dynamics solver:}
The solution of the discretized Navier–Stokes equations, i.e., Eq.~\eqref{eq:ustar} or Eq.~\eqref{eq:ustar_two_way}, does not present significant computational difficulties, as the governing discrete operators exhibit diagonal dominance for sufficiently small time steps (due to the presence of temporal derivative terms). 
In the present study, we have utilized the direct method proposed by \cite{lynch1964direct}, as implemented in \cite{vitoshkin2013direct}, for factorizing the Helmholtz operator. 
However, previous experience suggests that standard Krylov subspace solvers, such as BiCGStab or GMRES, can also be employed effectively, typically converging within two or three iterations. 
Another feasible approach involves approximating the inverse Helmholtz operator via an $N$-th order Taylor series expansion, exploiting its symmetry and positive definiteness for an appropriate choice of the time step $\Delta t$ and approximation order $N$ \cite{perot1993analysis}. 
\item \textbf{Structural dynamics solver:}
in addition to the fluid equations, the system also includes equations governing structural dynamics --- specifically Eqs.~\eqref{eq:particle_position}, \eqref{eq:bdy_linear_vel_iter}, and \eqref{eq:bdy_angular_vel_iter} --- relevant for two-way coupled fluid-structure interaction. 
These equations represent initial-value ordinary differential equations (ODEs), and their numerical solution is straightforward and computationally inexpensive. 
Currently, we discretize them using a second-order backward finite-difference scheme, although alternative methods, such as those from the Runge-Kutta family, or even analytical solutions could equally be considered.
\item \textbf{The pressure and force-density corrections coupled system:}
Ultimately, the computational bottleneck and the cornerstone of our numerical framework is the efficient solution of the regularized saddle-point linear system given by either Eq.~\eqref{eq:Saddle} for one-way coupling or Eq.~\eqref{eq:pressure_force_iterative_system} for two-way coupling scenarios. Since this system appears at each iteration or time step, its efficient solution has a significant impact on the overall computational cost and stability of the algorithm.
\end{enumerate}

Let us examine more closely the challenges associated with solving the saddle-point system. A common approach is to solve the full coupled system directly, as originally proposed in the IBPM by \cite{taira2007immersed}. In their formulation, the authors derived a modified Poisson equation by approximating the inverse Laplacian operator via a Taylor series expansion in time. Notably, when a first-order approximation is applied, it leads to a saddle-point system with the same left-hand-side as in the present work. 
The authors note that given a careful choice of time step and Lagrangian points, the modified Poisson operator is symmetric and positive definite, and hence the system can be solved using the conjugate gradient method. 
Nevertheless, the efficiency of the iterative solution is not directly addressed, and there is no guarantee for iteration count independent on grid resolution and physical parameters.
A key challenge of this approach lies in the involvement of the Laplacian operator, which inherently governs the system. 
The Laplacian is known for its poor conditioning due to the elliptic nature of the diffusion process it represents. 
It spans multiple spatial scales, resulting in a wide eigenvalue spectrum and inherently poor conditioning. 
This limitation becomes more pronounced at higher resolutions, as finer grids capture smaller spatial scales, leading to unbounded growth in the condition number. 
Despite the existence of efficient method for the solution of \emph{standard}  Poisson problems,
such as multigrid method and specially tailored direct solvers, these methods
 are inefficient when applied directly to the  \emph{modified} Poisson equation arising in our saddle-point formulation. 
These considerations underscore the need for an efficient preconditioner to enable robust and rapid convergence of solutions to the saddle-point systems described by Eq.~\eqref{eq:Saddle} or Eq.~\eqref{eq:pressure_force_iterative_system}, as well as convergence theory. 
Particularly, if this preconditioner enables reduction of modified to standard Poisson problems, it will allow the efficient use of many existing methods.
The development of such a preconditioner, along with both formal and numerical validation of its effectiveness, is the focus of the following sections.

\subsection{The Laplacian as a preconditioner}
The idea of constructing an efficient preconditioner originates from forming the \textit{primal} Schur complement:
\begin{equation}\label{eq:SchurPrimal}
S_p \coloneqq L + B^T C^{-1} B,
\end{equation}
and subsequently solving the decomposed system of equations, first for the pressure correction $p'$ and then for the force-density correction $\mathbf{F}'$:
\begin{align}\label{eq:SchurPrimalP}
S_p p' & = RHS_{p'} + B^T C^{-1} RHS_{\mathbf{F}'}, \\
\label{eq:SchurPrimalF}
\mathbf{F}' & = C^{-1} (B p' - RHS_{\mathbf{F}'}).
\end{align}
To reduce computational cost, we adopt the scalar approximation \( C \approx \frac{1}{2}I_m \) proposed in \cite{goncharuk2023immersed}
\footnote{This scalar approximation was achieved in \cite{goncharuk2023immersed} by lumping the matrix $R^T R$.}
, thereby avoiding the explicit inversion of \( C \) by using \( C^{-1} \approx 2I_m \). 
This leads to the following approximation of the primal Schur complement:
\begin{equation} \label{eq:tildeSp}
S_p \approx \tilde{S}_p \coloneqq L + 2 B^T B.
\end{equation}
Substituting this approximation into Eqs.~\eqref{eq:SchurPrimalP} and \eqref{eq:SchurPrimalF} yields the simplified system:
\begin{align}\label{eq:ApproxSchurPrimalP}
\tilde{S}_p p' & = RHS_{p'} + 2 B^T  RHS_{\mathbf{F}'}, \\
\label{eq:ApproxSchurPrimalF}
\mathbf{F}' & = 2(B p' - RHS_{\mathbf{F}'}).
\end{align}
However, both \( S_p \) and \( \tilde{S}_p \) remain large and ill-conditioned, primarily due to the system’s dependence on the Laplacian operator. To address this, we explicitly employ the Laplacian \( L \) itself as a preconditioner. In the following sections, we demonstrate that Eq.~\eqref{eq:SchurPrimalP} can be efficiently solved using a Krylov subspace method preconditioned by \( L \), leveraging the spectral equivalence between \( S_p \) and \( L \). Specifically, in Section~\ref{sec:thm}, we show that the eigenvalues of the preconditioned operator \( L^{-1}S_p \) are contained within a narrow, bounded interval that is independent of physical parameters and grid resolution. Provided that \( \tilde{S}_p \) closely approximates \( S_p \), similar spectral properties hold approximately for the operator \( L^{-1}\tilde{S}_p \). This spectral equivalence leads to rapid convergence of Eq.~\eqref{eq:ApproxSchurPrimalP} when solved with the Laplacian preconditioner, as verified numerically in Section~\ref{sec:results}. Since the pressure--force-density correction system constitutes the main computational bottleneck, this preconditioning strategy substantially improves overall computational efficiency and demonstrates favorable scalability.

\begin{remark}
There is abundance of literature on block-preconditioning for saddle point matrices as appears in Eq. \eqref{eq:Saddle} or \eqref{eq:pressure_force_iterative_system}, mainly by Schur complement preconditioning.
Since forming and inverting the exact Schur-complement is computationally expensive, broad research is dedicated to the search of cheap approximations for the inverse of the Schur-complement, see \cite{elman2014finite}.
Our method offers an efficient approximation for the primal Schur complement for a certain class of saddle point matrices, and can be implemented either by the steps described in Section \ref{sec: Implement} or by a corresponding block lower triangular Schur complement preconditioner.
\end{remark}

\begin{remark}
It is worth noting that the condition number of the matrix $R^T R$ can grow significantly when the Lagrangian discretization is substantially over- or under-resolved relative to the Eulerian grid. Although the theoretical invertibility of $R^T R$ is ensured as long as the Lagrangian points are distinct (since, in this case, $R$ has linearly independent columns), practical measures should be taken to avoid near-singularity. A common choice in the literature is to use comparable spacing between the two grids; see, e.g., \cite{taira2007immersed}. Near-singularity may also occur in solid--solid or solid--boundary interactions. In such cases, a sufficiently fine grid is used to keep the points distinct, and the model includes a repulsion force that enforces point separation. An extended discussion of the grid-spacing effects on the $R^T R$ matrix--vector product approximation and the influence of boundary proximity on the preconditioner performance is provided in \ref{A1} and \ref{A2}, respectively.
\end{remark}

\subsection{Efficient matrix-free implementation}
In this subsection, we describe the current matrix-free implementation. 
First, recall that the operator $B^T$ consists of the (negative) divergence applied to the discrete Dirac delta function weights obtained from Eq.~\eqref{eq:delta}. 
A matrix-free implementation allows efficient shared-memory parallelization (implemented here with OpenMP and executed with 48 threads) and enables rapid recomputation of these operators at each time step. 
Second, at each step of the preconditioned Krylov iteration, only a matrix-vector product with the preconditioned operator is required. 
We employ left preconditioning, which is mathematically equivalent to solving Eq.~\eqref{eq:ApproxSchurPrimalP} after multiplying both sides from the left by $L^{-1}$. However, we do not explicitly form the inverse operator $L^{-1}$, nor do we repeatedly apply a Laplacian solver to multiple columns. Instead, noting that the preconditioned system can be represented as:
\begin{equation*}
L^{-1}\tilde{S}_p = I_m + 2L^{-1}B^TB,
\end{equation*}
where $I_m$ is an identity matrix of appropriate dimension, we observe that the product of $L^{-1}\tilde{S}_p$ with a vector $\mathbf{v}$ can be efficiently obtained by applying the direct solver of \cite{lynch1964direct} only once to the vector $B^T(B\mathbf{v})$. Employing Intel MKL routines\footnote{We convert $B$ and $B^T$ into CSR format at each time step to facilitate efficient use of Intel MKL routines.} allows efficient computation of the sparse matrix-vector products. Furthermore, due to the scalar approximation for $C^{-1}$, computing the second term of Eq.~\eqref{eq:ApproxSchurPrimalP} and the entire Eq.~\eqref{eq:ApproxSchurPrimalF} is trivial, as both involve only a single matrix-vector product. 
Importantly, our matrix-free implementation requires only a single application of the operator $L^{-1}$ to the vector $B^T(B\mathbf{v})$ per iteration or time step. 
Consequently, it incurs no additional computational cost compared to non-preconditioned solvers for Eqs. \eqref{eq:Saddle} or \eqref{eq:pressure_force_iterative_system}.

Finally, we discuss the factorization of the preconditioner $L$. 
Lynch~\cite{lynch1964direct} demonstrated that any matrix expressible as a sum of Kronecker products of smaller matrices with identity matrices can be factorized efficiently using eigen-decomposition of each smaller matrix. 
As shown therein, the resulting system can be solved with a computational complexity of $O(n^{4/3})$ for a regular 3D grid. 
The computational cost can be further reduced by applying the Thomas TDMA algorithm along one of the three coordinate directions. 
For transient problems, this approach is highly efficient, as the associated operators can be factorized once in advance and reused at every time step~\cite{vitoshkin2013direct}.

\section{Theory: a spectral bound on the preconditioned system}\label{sec:thm}

In this section, we analyze the use of the Laplacian \( L \) as a preconditioner for the Schur complement \( S_p \) of the coupled pressure--force-density system \eqref{eq:Saddle} or \eqref{eq:pressure_force_iterative_system}.
A good preconditioner should resemble the original matrix, in the sense of spectral equivalence, i.e., the spectrum of the preconditioned system should be bounded.
Theorem \ref{thm:thm} shows this property.

\begin{theorem} \label{thm:thm}
For every generalized saddle-point matrix of the form
\begin{equation}\label{eq:SaddleSym}
\begin{bmatrix}
L & B^T \\
B & -C
\end{bmatrix}
=
\begin{bmatrix}
G^TG & G^T R \\
R^T G & -R^T R
\end{bmatrix},
\end{equation}
where $G$ is an $n\times m$ matrix and $R$ an $n\times k$ matrix\footnote{Typically, $n>m>k,$ but this assumption is not necessary for the Theorem's statement.},
the leading block $L$ is spectrally equivalent to the primal Schur complement $S_p$ from Eq. \eqref{eq:SchurPrimal}, in the sense
\begin{equation} \label{eq:eigsPrecondtioned}
\text{\upshape{spec}}(L^{-1}S_p) \subseteq [1,2].
\end{equation}
\end{theorem}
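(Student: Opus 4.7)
The plan is to prove the spectral inclusion \eqref{eq:eigsPrecondtioned} by bounding the generalized Rayleigh quotient $x^T S_p x / x^T L x$ within the interval $[1,2]$. Under the natural assumption that $G$ has full column rank, $L = G^T G$ is symmetric positive definite, so the spectrum of $L^{-1}S_p$ coincides with the set of generalized eigenvalues of the pencil $(S_p, L)$, which by the Courant--Fischer characterization are attained as values of this Rayleigh quotient on the corresponding eigenvectors. Hence it suffices to establish the scalar inequalities $1 \le x^T S_p x / x^T L x \le 2$ for every $x$ with $Gx \neq 0$.

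The key algebraic step is to rewrite the Schur complement so that an orthogonal projector emerges. Substituting $B = R^T G$ and $C = R^T R$ into \eqref{eq:SchurPrimal} gives
\begin{equation*}
S_p \;=\; G^T G + G^T R (R^T R)^{-1} R^T G \;=\; G^T (I_n + P_R)\, G,
\end{equation*}
where $P_R := R(R^T R)^{-1} R^T$ is the orthogonal projection onto $\operatorname{range}(R)$, well defined under the implicit full-column-rank assumption on $R$. Setting $y = Gx$, I then obtain
\begin{equation*}
\frac{x^T S_p x}{x^T L x} \;=\; \frac{y^T(I_n + P_R)\, y}{y^T y} \;=\; 1 + \frac{\|P_R y\|^2}{\|y\|^2},
\end{equation*}
and the desired inclusion follows at once from the standard contraction bound $0 \le \|P_R y\|^2 \le \|y\|^2$ valid for any orthogonal projector. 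The lower bound $1$ is attained whenever $Gx$ is orthogonal to $\operatorname{range}(R)$ and the upper bound $2$ whenever $Gx$ lies in $\operatorname{range}(R)$, so the interval $[1,2]$ is sharp.

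I do not anticipate a real obstacle here; the only matter requiring care is stating the rank hypotheses on $G$ and $R$ that make both $L$ and $C$ invertible and the Schur complement meaningful --- conditions that are implicit in the theorem as written. With those in place, the Rayleigh-quotient bound translates verbatim into the spectral statement via Courant--Fischer, and no Krylov-space or perturbation argument is needed. The only mildly nonstandard observation is recognizing the combination $R(R^T R)^{-1} R^T$ as an orthogonal projector, which instantly collapses the estimate to a one-line bound.
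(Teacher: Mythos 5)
Your proof is correct and follows essentially the same route as the paper: both identify $R(R^TR)^{-1}R^T$ as an orthogonal projector, reduce the problem to a generalized Rayleigh quotient, and invoke the contraction property $\|Py\|\le\|y\|$ of an orthogonal projector to get the $[0,1]$ bound on the correction term. Your write-up is a bit more compact (you write $S_p=G^T(I_n+P_R)G$ directly and use $\langle P_R y,y\rangle=\|P_R y\|^2$ in one step, whereas the paper decomposes $Gx$ into its nullspace and range components of $P$), and you make the implicit full-rank hypotheses on $G$ and $R$ explicit and note the sharpness of the bounds, but the underlying argument is the same.
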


\begin{proof}
It is readily seen that
\begin{equation*}
L^{-1}S_p = I_m + L^{-1} B^T C^{-1} B
\end{equation*}
where $I_m$ stands for the identity matrix of size $m\times m$.
Hence, it is sufficient to show that
\begin{equation*}
\text{spec}(L^{-1} B^T C^{-1} B)\subseteq [0,1],
\end{equation*}
or equivalently, that the generalized eigenvalue problem
\begin{equation}\label{eq:generalizedShifted}
B^T C^{-1} B \bfv = \lambda L \bfv
\end{equation}
has only solutions that satisfy $\lambda\in[0,1]$.
Observe that
\[
B^T C^{-1} B = G^T R (R^T R)^{-1} R^T G = G^T P G
\]
where $P \coloneqq R (R^T R)^{-1} R^T$ is an orthogonal projector (namely, $P^2=P$ and $P^T=P$).
To bound $\lambda$, we bound the generalized Rayleigh quotient
\begin{equation}\label{eq:Rayleigh01}
\frac{\bra B^TC^{-1}B\bfv, \bfv \ket}{\bra L\bfv,\bfv \ket} = \frac{\bra G^T P G \bfv, \bfv \ket}{\bra G^T G \bfv, \bfv \ket} = \frac{\bra PG \bfv, G \bfv \ket}{\bra G \bfv, G \bfv \ket},
\quad
\bfv\in\mathbb{R}^m.
\end{equation}
Since it is an orthogonal projector, the only eigenvalues of $P$ are 0 and 1, and $\text{Null}(P) = \text{Range}(P) ^ \perp,$ see, e.g., \cite{saad2011numerical}.
Particularly, every vector can be uniquely decomposed into a sum of orthogonal vectors in the nullspace and in the range of $P$, e.g.,
we can write $G \bfv = \bfv_1 + \bfv_2$, where $\bfv_1\in \text{Null}(P)$, $\bfv_2\in \text{Range}(P)$ and $\bra \bfv_1,\bfv_2 \ket = 0.$
Consequently,
\begin{equation}\label{eq:ProjectionNorm}
\bra PG \bfv, G\bfv \ket = \bra P(\bfv_1+\bfv_2), \bfv_1+\bfv_2 \ket = \bra P\bfv_1, \bfv_1 \ket + 2 \bra P\bfv_1, \bfv_2 \ket + \bra P\bfv_2, \bfv_2 \ket = \bra \bfv_2,\bfv_2 \ket = \|\bfv_2\|^2,
\end{equation}
where the third equality holds because $P\bfv_1 = 0$ and $P\bfv_2 = \bfv_2$.
Finally, we substitute \eqref{eq:ProjectionNorm} into \eqref{eq:Rayleigh01}, which yields
\begin{equation}
0 \leq \frac{\bra PG \bfv , G\bfv \ket}{\bra G\bfv, G \bfv \ket} = \frac{\|\bfv_2\|^2}{\| \bfv_1+\bfv_2 \|^2} = \left(\frac{\|P(\bfv_1+\bfv_2)\|}{\| \bfv_1+\bfv_2 \|}\right)^2  \leq \|P\|^2_2 =\left(\rho(P)\right)^2 = 1
\end{equation}
where $\rho(P)$ is the spectral radius of $P$, and since $P$ is symmetric, it is equal to $\|P\|_2$.
\end{proof}

The importance of Theorem \ref{thm:thm} stems from its generality, i.e., for the spectral equivalence to hold, any discretisation of the gradient and any regularization operator can be considered. The expected scalability and robustness of the method is therefore proved for a large variety of cases, with symmetry being the only limitation, i.e., the divergence must be the transpose of the gradient, and the interpolation must be the transpose of the regularization.

The assumption of symmetry is not particularly restrictive, as it is natural to take the same discrete Dirac delta functions for the regularization and interpolation, and the same discretization for the gradient and divergence.
In practice, however, some configurations might lead to a non-symmetric Laplacian. These include, for example, computational domains with all no-slip boundary conditions, which require adding a Dirichlet point within the computational domain to prevent singularity of the pressure-correction Laplacian. Nevertheless, even when the Laplacian slightly differs from $G^T G$, the spectral result of Theorem~\ref{thm:thm} seems to hold in practice, as demonstrated in the next section.

\begin{remark} \label{remark:convergenceRate}
As an immediate corollary of Theorem \ref{thm:thm}, the error iteration matrix $T = I_m-L^{-1}S_p$ of the preconditioned system satisfies $\rho(T)\leq 1$. When this inequality is strong, convergence is guaranteed.
A Krylov iterative method typically further accelerates the convergence and might converge in just a few iterations even when the spectral radius of the error iteration matrix is even greater than 1.
However, the convergence rate of a Krylov iterative method is determined by the scattering of the eigenvalues, which can depend on the specific configuration.
In typical cases, the number of Lagrangian points is very small compared to the number of Eulerian cells.
In such a case,
\begin{equation}
\tilde{S}_p = L + 2B^TB\approx L
\end{equation}
and the error iteration matrix has a large null-space, which implies that the eigenvalues are clustered and the Krylov iteration count is low.
In the next section, we demonstrate that even when the number of Lagrangian points grows --- by taking multiple bodies --- the observed iteration count remains low.
\end{remark}

\begin{remark}
A similar result can be proved for a weighted version of the saddle-point matrix, such as
\begin{equation}
\begin{bmatrix}
G^T W G & G^T W R \\
R^T W G & -R^T W R
\end{bmatrix},
\end{equation}
where $W$ is SPD, by taking $\tilde{G}=W^{\frac{1}{2}}G$ and $\tilde{R}=W^{\frac{1}{2}}R$.
Since such a weighted system appears in \cite{taira2007immersed}, it gives rise to a utilization of a similar preconditioning approach in the context of projection methods.
\end{remark}

\section{Results}\label{sec:results}

In this section we demonstrate the efficiency of our method in solving various problems.
In Subsection \ref{subsec:verification}
we present several model problems in 3D, and verify our method by comparing the simulation results to existing references.
In subsection \ref{subsec:efficiency} we demonstrate the computational efficiency of our preconditioning approach in 2D and in 3D.

\subsection{Verification study}
\label{subsec:verification}

\subsubsection{Model problem 1: a transversely  oscillating solid sphere (one-way coupling)}
\label{subsubsec:OscillatingSphere}

A transversely oscillating solid sphere of diameter $D$ placed within a $4D\times 4D \times 6D$ box filled with a quiescent fluid (see Fig. \ref{fig:ball}) is considered. The sphere oscillates vertically with velocity and position given by:
\begin{equation}
U_z = U_{max}\sin({\omega T}), \qquad Z=-\frac{U_{max}}{\omega}\cos(\omega T)
\label{eq:U_z}
\end{equation}
where $\omega$ is the angular frequency and $T$ is the time.
No-slip boundary conditions are imposed on all solid boundaries, including the surface of the sphere and the walls of the box.
To non-dimensionalize the system, $D$, $U_{max}$, $U_{max}/D$, and $\rho U_{max}^2$ are utilized as characteristic scales of length, velocity, time, and pressure, respectively, yielding the following relationships governing the sphere's kinematics:
\begin{equation}
    \textit{u}_z = \sin\bigg(\frac{D}{A}t\bigg), \qquad
    \textit{z} = -\frac{A}{D}\cos\bigg(\frac{D}{A}t\bigg)
    \label{eq:u_z_and_z}
\end{equation}
where $A=U_{max}/\omega$ represents the amplitude of the oscillations.
The system's behavior is governed by two non-dimensional parameters:
the Reynolds number, $Re=U_{max}D/\nu$, and the amplitude ratio, $A/D$ (where $D/A$ is also known as the reduced frequency).
A snapshot of representative flow including pathlines and contours of the vertical velocity component $u_z$ around the sphere, typical of the given set-up, is presented in Fig. \ref{fig:flow}.

\begin{figure}
\begin{center}
	\newcommand{\image}[1]{\includegraphics[width=0.4\linewidth]{./#1}}
	\subfigure[Physical model]{\image{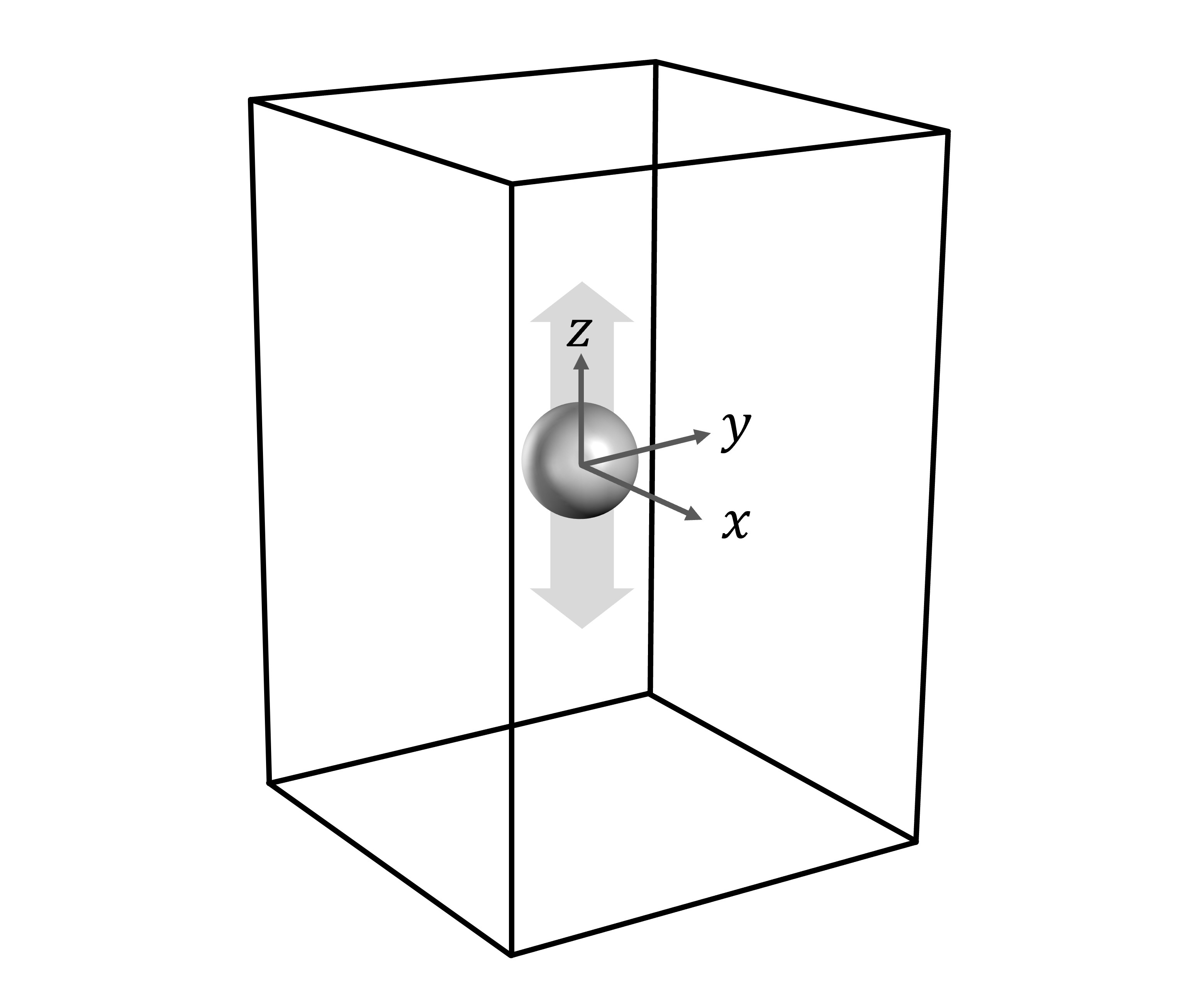}\label{fig:ball}}
	\subfigure[Snapshot of a representative flow]{\image{flow.eps}\label{fig:flow}}
    \hspace{10pt}
    \\
\end{center}
\caption{An oscillating sphere of diameter $D$ and a box of dimensions $4D\times 4D \times 6D$}
\label{fig:3D}
\end{figure}

For completeness, we recall the force balance equation of an accelerating sphere in an otherwise quiescent fluid, as described in \cite{sela2021semi}. 
In accordance with the IBM formalism where the sphere is filled with the same fluid as that outside, this equation can be expressed as \cite{breugem2012second}:
\begin{equation}
\frac{d}{dt}\int_{V_{sph}} \bfu dV = \bff_D + \int_{V_{sph}} \bff dV
\label{eq:force_bal}
\end{equation}
where $\bff_D$ is the instantaneous drag force exerted by the fluid on the sphere and the integral of $\bff$ is the instantaneous external force required to maintain the prescribed kinematics of the body. 
This external force can be directly computed by summing all the IBM forces:
\begin{equation}
\int_{V_{sph}} \bff dV \approx\sum_{i,j,k}\bff_{ijk}\Delta x \Delta y \Delta z.
\label{eq:IBM_force}
\end{equation}
The drag force, $\bff_D$, can be determined by using Eqs. \eqref{eq:force_bal} and \eqref{eq:IBM_force}. 
Following \cite{uhlmann2005JCP}, we assume rigid-body motion of the fluid within the sphere, allowing us to approximate the left-hand side term of Eq.~\eqref{eq:force_bal} as
\begin{equation}
\frac{d}{dt}\int_{V_{sph}} \bfu dV \approx \frac{d\bfu_c}{dt}V_{sph}
\label{eq:IBM_acceler_approx}
\end{equation}
where $d\bfu_c/dt$ is the acceleration of the sphere's center of mass. For comparison purposes, we express the drag force $\bff_D$ in terms of the drag coefficient for a spherical geometry:
\begin{equation}
\bfC_{D}=\frac{8\bff_D}{\rho U_{max}^2\pi D^2}.
\label{eq:Drag_Coef}
\end{equation}
After scaling the drag force by $ \rho U_{max}^2 D^2$ the $z$ component of the drag coefficient $C_D\equiv (C_D)_z$ becomes $C_{D}={8f_{D_{z}}}/{\pi}$.

To verify our method, we compare the time evolution and maximum values of the drag coefficient $C_{D}$ obtained on a $400\times400\times600$ grid for four Reynolds numbers in the range $50\leq Re\leq200$ at $A/D=1$ with the corresponding data reported in \cite{sela2021semi} and \cite{blackburn2002mass}. 
Following the principles outlined in Section 2.1, the immersed boundary is represented by Lagrangian points uniformly distributed with spacing approximately equal to the Eulerian grid size. 
This uniform distribution was achieved using Leopardi's noniterative method \cite{leopardi2006partition}, which ensures each Lagrangian point is associated with a virtual surface region of equal area. The comparison of the drag coefficient time evolution is presented in Fig. \ref{fig:drag_AtoD1}, where the results of our current simulations are shown by solid lines, while the reference data reported in \cite{sela2021semi} are denoted by filled circles. The sphere's position is indicated by a dotted line. 
Excellent agreement is obtained for the entire range of governing parameters, with the maximum relative deviation not exceeding 2.1\%, thus successfully verifying our results. 
In Fig. \ref{fig:maxDrag}, the maximum drag coefficient values obtained in the present study are compared with the results for the entire range of $Re$ and $A/D$ values reported in \cite{blackburn2002mass}, under the assumption of axisymmetric non-confined flow. 
The maximum drag coefficients for $A/D=0.5,1$ and $1.5$ are shown by filled circles, while the results from \cite{blackburn2002mass} are denoted by a solid line. 
The results demonstrate good agreement, with deviations not exceeding 6.7\%. 
The slightly higher values obtained in the present study can be attributed to the presence of walls with no-slip conditions in our setup.

\begin{figure}
\begin{center}
	\newcommand{\image}[1]{\includegraphics[width=0.47\linewidth]{./#1}}
    \subfigure[Time evolution of the drag coefficient, $C_D$]{\image{drag_AtoD1.eps}\label{fig:drag_AtoD1}}
    \hspace{10pt}
\subfigure[Maximal values of the drag coefficent, max($C_D$)]{\image{maxDrag.eps}\label{fig:maxDrag}}
    \\
\end{center}
\caption{Time evolution of the drag coefficient $C_D=8f_z/\pi$ as a function of time for $A/D=1$, and maximal values of the drag coefficient, $\max(C_D)$, as a function of $A/D$.
The dotted line shows the position of the sphere's center (right axis).}
\label{fig:drag_time}
\end{figure}

\subsubsection{Model problem 2: multiple packed spheres (one-way coupling)}
\label{subsubsec:PorousSphere}

In the previous subsection we verified our method for the case of a single body.
In this subsection we preform additional experiments to further demonstrate the capabilities of the developed method for simulation of multiple moving bodies.

\begin{figure}
\begin{center}
    \newcommand{\image}[1]{\includegraphics[width=0.25\linewidth]{./#1}}
    \subfigure[7 packed sub-spheres]{\image{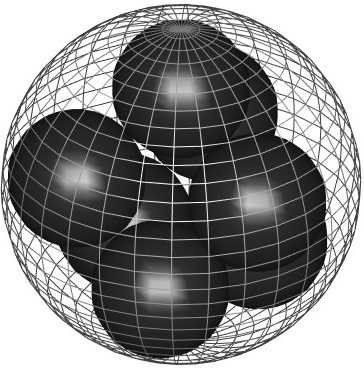}\label{fig:sevenSpheres}}
    \hspace{100pt}
    \subfigure[14 packed sub-spheres]{\image{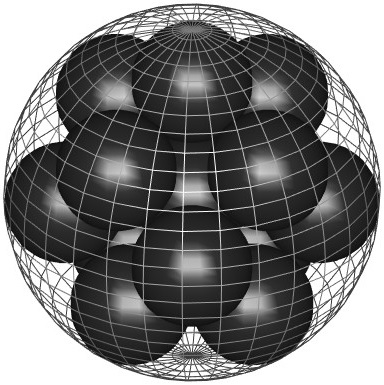}\label{fig:fourteenSpheres}}
\end{center}
\caption{Schematic representation of porous spheres modeled by arrays of (a) 7 and (b) 14 solid sub-spheres packed without overlap within a bounding sphere of diameter $D$.}
\label{fig:spheresPackaging}
\end{figure}

To this end, we focus on investigating flow properties in general and drag coefficients in particular that are typical of oscillating porous spheres.
This knowledge is critical for understanding and controlling many practical applications, including the behavior of particle clusters in fluidized bed reactors, the movement of particulate matter in power generation systems, the dynamics of microcarrier transport in bioprocess applications, and the settling behavior of sediments in coastal waters \cite{li2016rotation, yu2012numerical}, to name a few.
Following the methodology of Ma et al. \cite{ma2020particle}, a porous sphere is modelled as an array of small solid sub-spheres packed without overlap within a larger bounding sphere of diameter $D$.
The packing configuration of these small spheres is not unique and can be achieved through various approaches.
For the present study, we employed the serial symmetrical relocation algorithm developed by Huang and Liang \cite{huang2012serial} to generate configurations with 7 and 14 sub-spheres.

\begin{table}[h]
\centering
\begin{tabular}{lccc lccc}
\hline
\toprule
\multicolumn{4}{c}{7 sub-sphere configuration, $r=0.193$, $\varepsilon=0.598$} & \multicolumn{4}{c}{14 sub-sphere configuration, $r=0.161$, $\varepsilon=0.526$} \\
\midrule
 \small  \# Sub- & \multicolumn{3}{c}{Coordinates of sub-sphere center} & \small \# Sub- & \multicolumn{3}{c}{Coordinates of sub-sphere center} \\
\small sphere & $x\times 10$ & $y\times 10$ & $z\times 10$ & \small sphere & $x\times 10$ & $y\times 10$ & $z\times 10$ \\
\midrule
\small 1 & \small$1.1140$ & \small$-1.9296$ & \small$-2.1126$ & \small 1 & \small$0$ & \small$-2.287$ & \small$-2.4917$ \\
\small 2 & \small$-2.2281$ & \small$0$ & \small$-2.1126$ & \small 2 & \small$-2.2875$ & \small$0$ & \small$-2.4917$ \\
\small 3 & \small$1.1140$ & \small$1.9296$ & \small$-2.1126$ & \small 3 & $2.2875$ & \small$0$ & \small$-2.4917$ \\
\small 4 & \small$-1.5009$ & \small$-2.5997$ & \small$0.64522$ & \small 4 & \small$0$ & \small$2.2875$ & \small$-2.4917$ \\
\small 5 & \small$3.0019$ & \small$0$ & \small$0.64522$ & \small 5 & \small$-2.3820$ & \small$-2.3820$ & \small$-0.30478$ \\
\small 6 & \small$-1.5009$ & \small$2.5997$ & \small$0.64522$ & \small 6 & \small$2.3820$ & \small$-2.3820$ & \small$-0.30478$ \\
\small 7 & \small$0$ & \small$0$ & \small$3.0704$ & \small 7 & \small$-2.3820$ & \small$2.3820$ & \small$-0.30478$ \\
 &  &  &  & \small 8 & \small$2.3820$ & \small$2.3820$ & \small$-0.30478$ \\
 &  &  &  & \small 9 & \small 0 & \small 0 & \small 0 \\
 &  &  &  & \small 10 & \small$0$ & \small$-2.8412$ & \small$1.8355$ \\
 &  &  &  & \small 11 & \small$-2.8412$ & \small$0$ & \small$1.8355$ \\
 &  &  &  & \small 12 & \small$2.8412$ & \small$0$ & \small$1.8355$ \\
 &  &  &  & \small 13 & \small$0$ & \small$2.8412$ & \small$1.8355$ \\
 &  &  &  & \small 14 & \small$0$ & \small$0$ & \small$3.3825$ \\
\bottomrule
\end{tabular}
\caption{Center coordinates of 7 and 14 sub-spheres constituting the model of a bounding sphere of diameter $D$. Here $r$ is the radius of sub-spheres normalized by $D$, and $\varepsilon$ is the corresponding porosity. The center of the bounding sphere is located at the origin of the coordinate system. }
\label{tab:spheresCenters}
\end{table}

The details regarding the radii and packing arrangements for each configuration of sub-spheres are given in Table \ref{tab:spheresCenters}.
The table also includes the corresponding porosity values $\varepsilon$, defined as the ratio of void volume to total volume ($\varepsilon = V_{void}/V_{total}$).
For numerical stability in the IBM implementation, each sub-sphere's radius was reduced by half a grid spacing to ensure proper surface contact resolution while avoiding ill-conditioning.
The physical configurations of the bounding spheres, containing 7 and 14 sub-spheres, are illustrated in Figs. \ref{fig:sevenSpheres} and \ref{fig:fourteenSpheres}, respectively.
A spherical wire-frame is shown in these figures to indicate the boundary of the theoretical bounding sphere.

To understand the dynamics of the system, we analyze the external forces (calculated using Eq. \eqref{eq:IBM_force}) required to drive the oscillatory motion of each sub-sphere.
Figure \ref{fig:External} presents the temporal evolution of these external forces (left axis) for the 7 sub-sphere and 14-sphere arrays under three different Reynolds numbers: $Re=100$, $Re=200$, and $Re=300$.
The dotted line represents the temporal evolution of the $z$-coordinate of the bounding sphere's center (right axis).

\begin{figure}
\begin{center}
	\newcommand{\image}[1]{\includegraphics[width=0.47\linewidth]{./#1}}
    \subfigure[$Re=100$, 7 sub-spheres]{\image{7_Re100_External.eps}\label{fig:7_Re100_External}}
    \hspace{20pt}
    \subfigure[$Re=100$, 14 sub-spheres]{\image{14_Re100_External.eps}\label{fig:14_Re100_External}}
    \hspace{20pt}
     \subfigure[$Re=200$, 7 sub-spheres]{\image{7_Re200_External.eps}\label{fig:7_Re200_External}}
     \hspace{20pt}
     \subfigure[$Re=200$, 14 sub-spheres]{\image{14_Re200_External.eps}\label{fig:14_Re200_External}}
     \hspace{20pt}
     \subfigure[$Re=300$, 7 sub-spheres]{\image{7_Re300_External.eps}\label{fig:7_Re300_External}}
     \hspace{20pt}
     \subfigure[$Re=300$, 14 sub-spheres]{\image{14_Re300_External.eps}\label{fig:14_Re300_External}}
    \\
\end{center}
\caption{Time evolution of external forces applied to individual sub-spheres in the 7-sphere array and in the 14-sphere array during oscillatory motion. Each solid curve corresponds to the force applied to a sub-sphere identified by its serial number from Table \ref{tab:spheresCenters}.}
\label{fig:External}
\end{figure}

The analysis of Figs. \ref{fig:7_Re100_External}, \ref{fig:7_Re200_External} and \ref{fig:7_Re300_External} demonstrates that sub-spheres experience different temporal forces, with their magnitudes and patterns determined by vertical position within the bounding sphere.
Sub-spheres at equal vertical positions experience identical temporal forces regardless of their horizontal coordinates, confirming the axial nature of the force distribution.
The temporal evolution of forces on individual sub-spheres exhibits clear asymmetry, quantified by differences between positive and negative force peaks.
Sub-spheres \#1, \#2, and \#3, positioned below the bounding sphere's center, experience negative forces approximately 50\% larger than their positive counterparts during downward motion, with reduced forces during upward motion due to shielding from upper sub-spheres.
Sub-sphere \#7, positioned at the top edge of the bounding sphere, displays an inverse asymmetry pattern, with positive force peaks exceeding negative values by approximately 40\%.
Sub-spheres \#4, \#5, and \#6, located slightly above the center and attached to the outer boundary of the bounding sphere, remain unshielded by other sub-spheres throughout the oscillation cycle.
This unshielded position results in nearly symmetric forces about zero, with peak-to-peak variations not exceeding 10\%.
The magnitude of these forces systematically decreases with increasing Reynolds number across all cases, while maintaining consistent phase relationships relative to the bounding sphere's position.
This reduction in force amplitude demonstrates the diminishing effect of viscous forces at higher Reynolds numbers, though the fundamental force distribution patterns persist.

The behavior of external forces in the case of 14 sub-spheres exhibits similar patterns while revealing additional features due to the more complex internal structure.
The analysis of Figs. \ref{fig:14_Re100_External}, \ref{fig:14_Re200_External} and \ref{fig:14_Re300_External} confirms that sub-spheres at equal vertical positions experience identical temporal forces regardless of their horizontal coordinates.
Sub-spheres \#1 through \#4, positioned at $z \approx -0.249$ below the bounding sphere's center, experience negative forces approximately 45\% larger than their positive counterparts during downward motion, with reduced forces during upward motion due to shielding from upper sub-spheres.
Sub-spheres \#5 through \#8, located near the equatorial plane at $z \approx -0.030$, show intermediate behavior with force asymmetry less pronounced than for the bottom spheres. Their position, slightly below the geometric center, results in force patterns that reflect partial shielding effects from both upper and lower neighboring spheres.
The force distribution of sphere \#9, uniquely positioned at the exact geometric center ($z = 0$) of the confining sphere, demonstrates nearly symmetric behavior about zero, with peak-to-peak variations not exceeding 5\%, due to balanced shielding from surrounding sub-spheres.
Sub-spheres \#10 through \#13, located at $z \approx 0.184$, display moderate force asymmetry with positive peaks approximately 30\% larger than negative ones. Sub-sphere \#14, positioned at $z \approx 0.338$ at the top edge of the bounding sphere, exhibits the strongest asymmetry in the upper region, with positive force peaks exceeding negative values by approximately 35\%. As in the previous case, these asymmetric force patterns persist across all examined $Re$ values, with diminishing magnitudes at higher Reynolds numbers.

\begin{figure}
\begin{center}
	\newcommand{\image}[1]{\includegraphics[width=0.45\linewidth]{./#1}}
    \subfigure[7 sub-spheres]{\image{7_DragCoefficient.eps}\label{fig:7_DragCoefficient}}
    \hspace{20pt}
     \subfigure[14 sub-spheres]{\image{14_DragCoefficient.eps}\label{fig:14_DragCoefficient}}
    \\
\end{center}
\caption{Time evolution of the total drag coefficient, $C_D$ obtained for the values of $Re=100$, $200$ and $300$. }
\label{fig:DragCoefficient}
\end{figure}

We next analyze the total drag coefficients, $C_D$, for arrays consisting of 7 and 14 sub-spheres.
These coefficients are calculated by summing the drag forces in the $z$ direction acting on each sub-sphere within the corresponding array and multiplying by a factor of $8/\pi$ (see Eq. \eqref{eq:Drag_Coef}).
The time evolution of $C_D$, calculated for arrays consisting of 7 and 14 sub-spheres, is shown in Fig. \ref{fig:DragCoefficient}.

\begin{figure}
\centering
    \subfigure[Isosurfaces of the $Q=0.05$ criterion for $Re=100$, calculated at different points of the sphere's trajectory: the lowest point, mid-point on the way up, highest point and mid-point on the way down (from left to right)]{
        \includegraphics[width=0.22\textwidth,clip=]{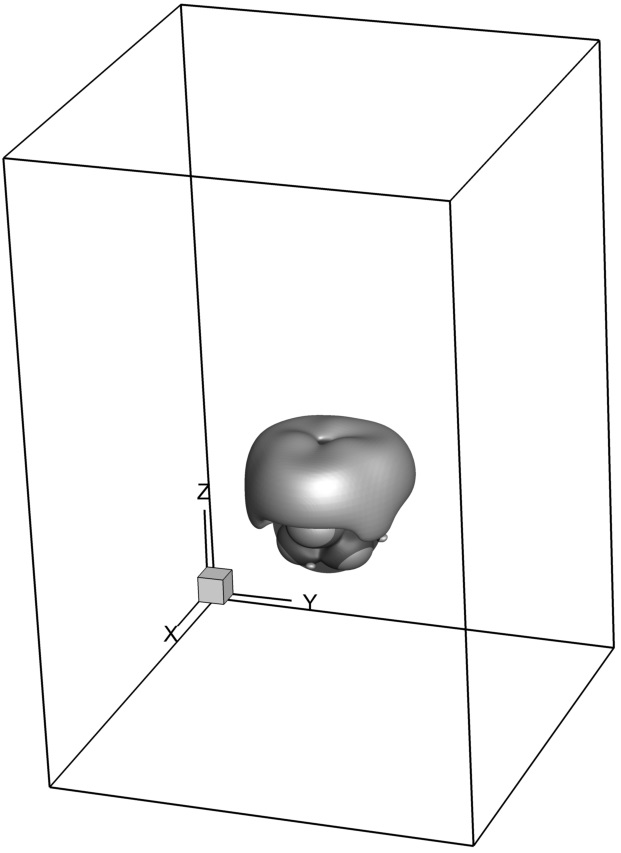}
        \hspace{10pt}
        \includegraphics[width=0.22\textwidth,clip=]{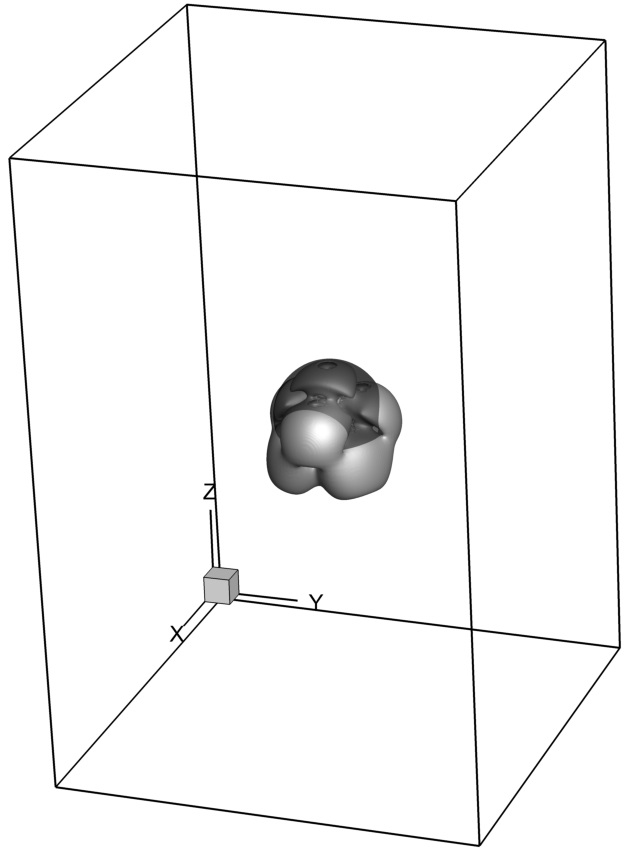}
        \hspace{10pt}
        \includegraphics[width=0.22\textwidth,clip=]{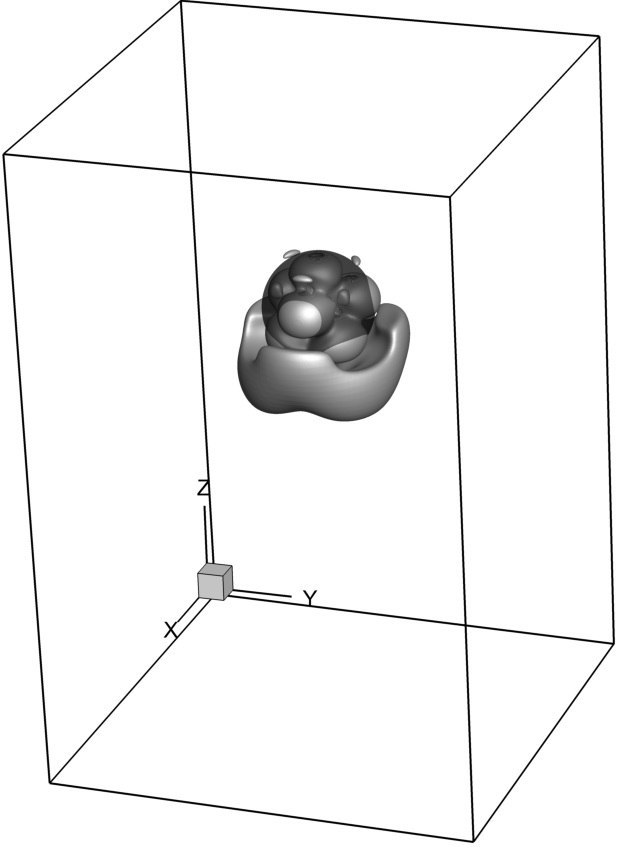}
        \hspace{10pt}
        \includegraphics[width=0.22\textwidth,clip=]{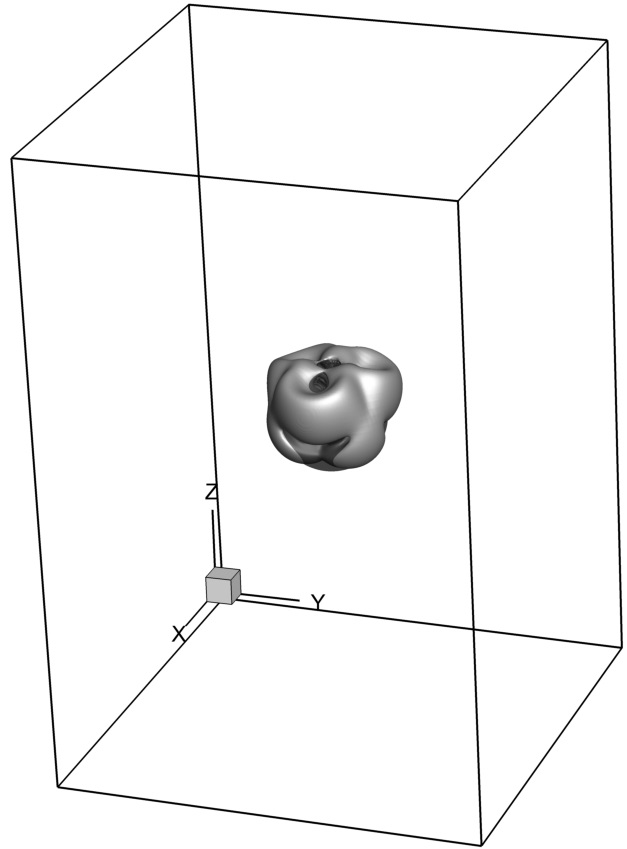}
    }
    \subfigure[Isosurfaces of the $Q=0.05$ criterion for $Re=200$, calculated at different points of the sphere's trajectory: the lowest point, mid-point on the way up, highest point and mid-point on the way down (from left to right)]{
        \includegraphics[width=0.22\textwidth,clip=]{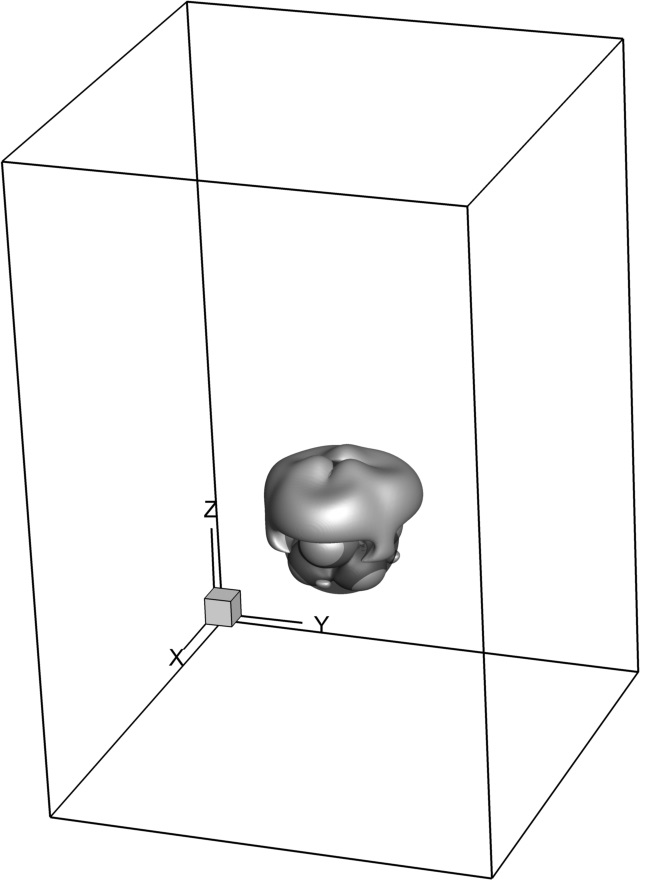}
        \hspace{10pt}
        \includegraphics[width=0.22\textwidth,clip=]{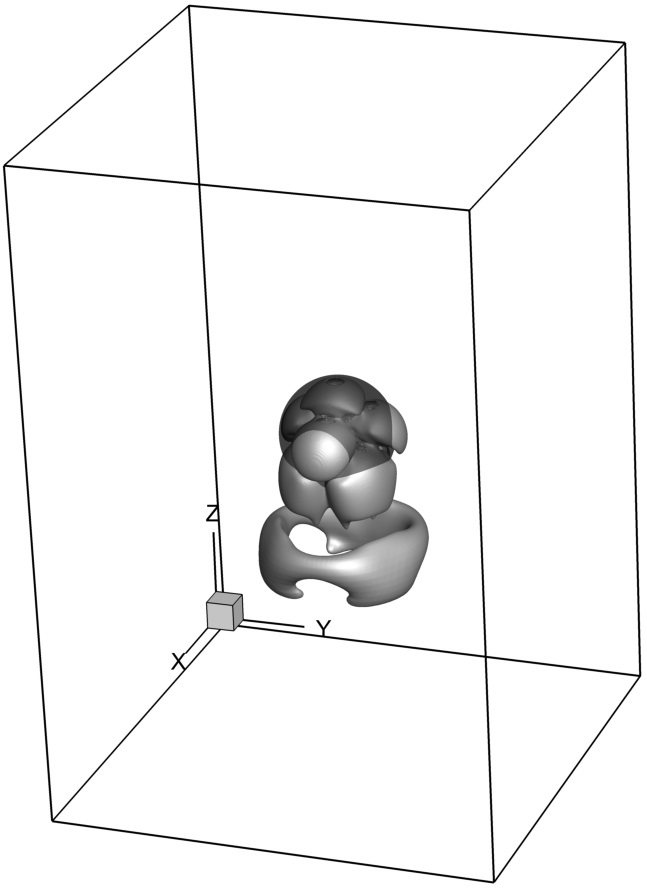}
        \hspace{10pt}
        \includegraphics[width=0.22\textwidth,clip=]{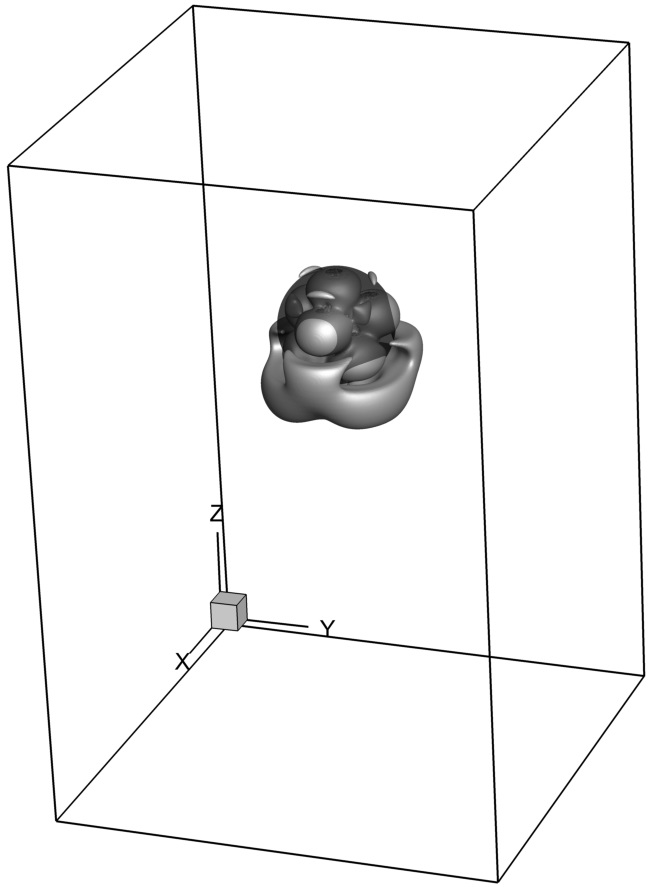}
        \hspace{10pt}
        \includegraphics[width=0.22\textwidth,clip=]{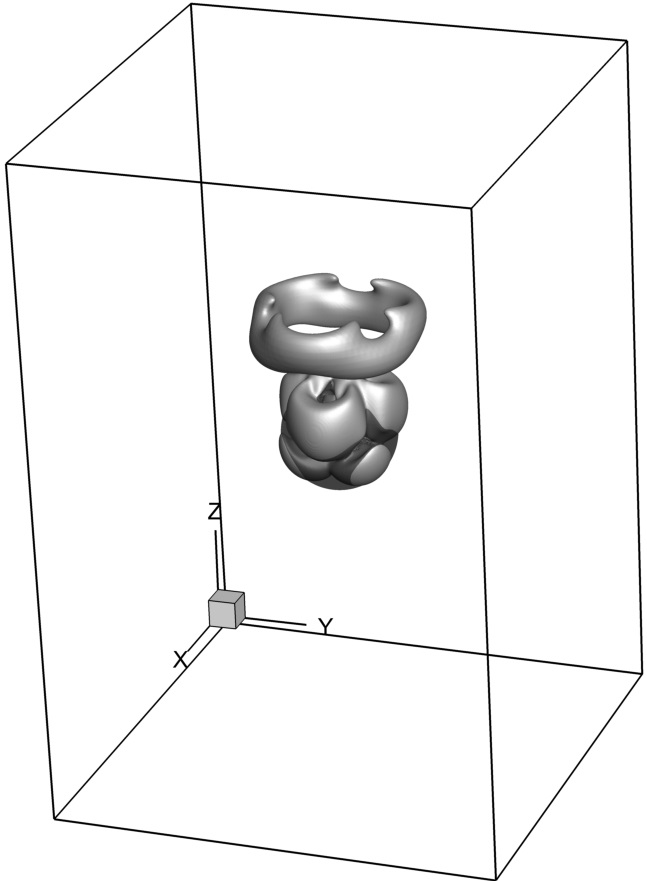}
    }
    \subfigure[Isosurfaces of the $Q=0.05$ criterion for $Re=300$, calculated at different points of the sphere's trajectory: the lowest point, mid-point on the way up, highest point and mid-point on the way down (from left to right)]{
        \includegraphics[width=0.22\textwidth,clip=]{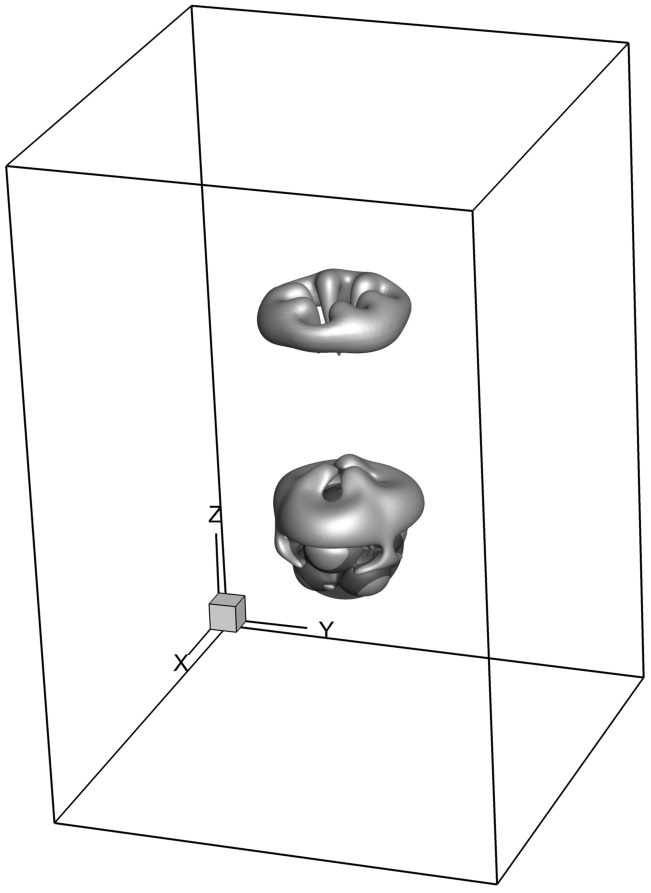}
        \hspace{10pt}
        \includegraphics[width=0.22\textwidth,clip=]{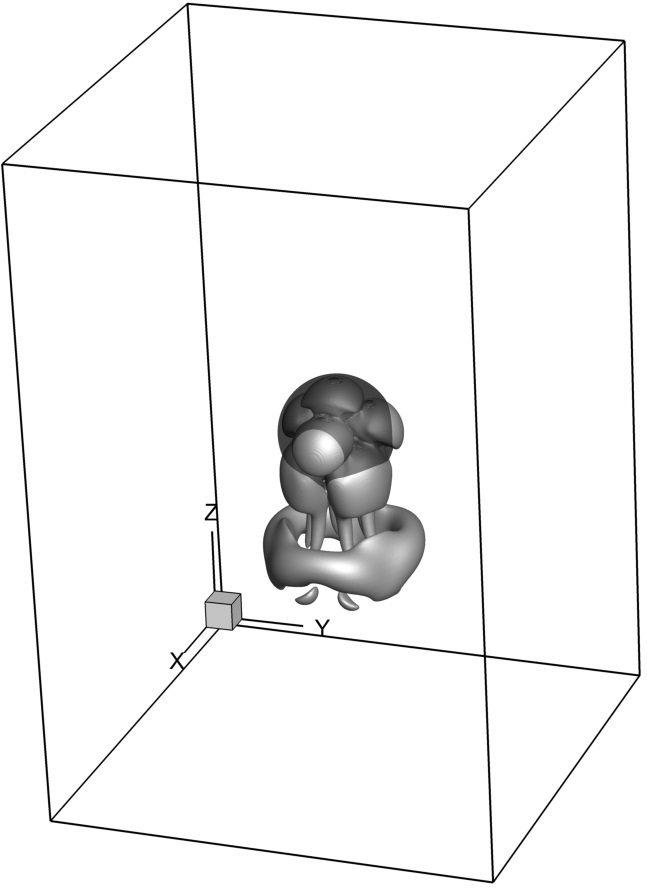}
        \hspace{10pt}
        \includegraphics[width=0.22\textwidth,clip=]{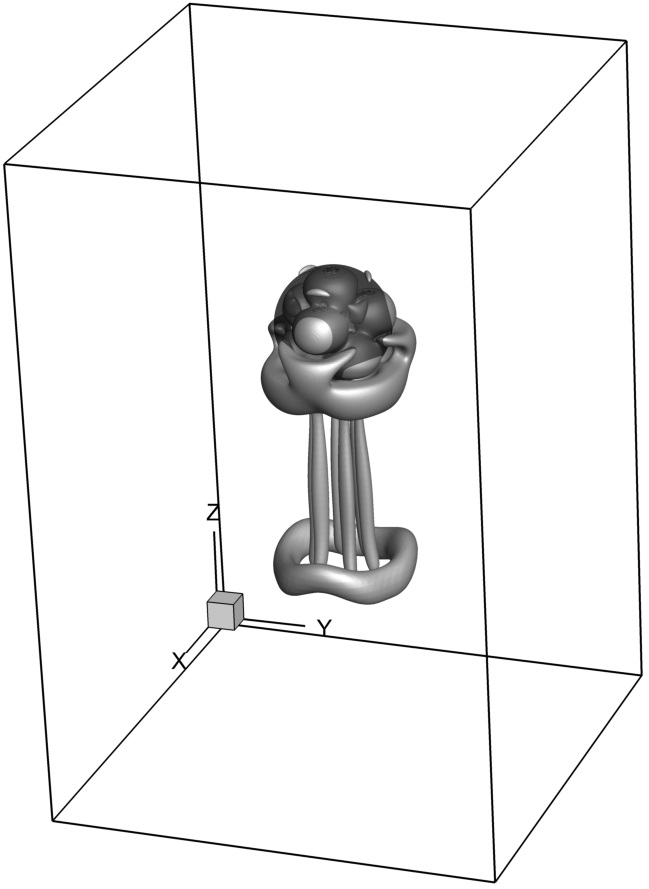}
        \hspace{10pt}
        \includegraphics[width=0.22\textwidth,clip=]{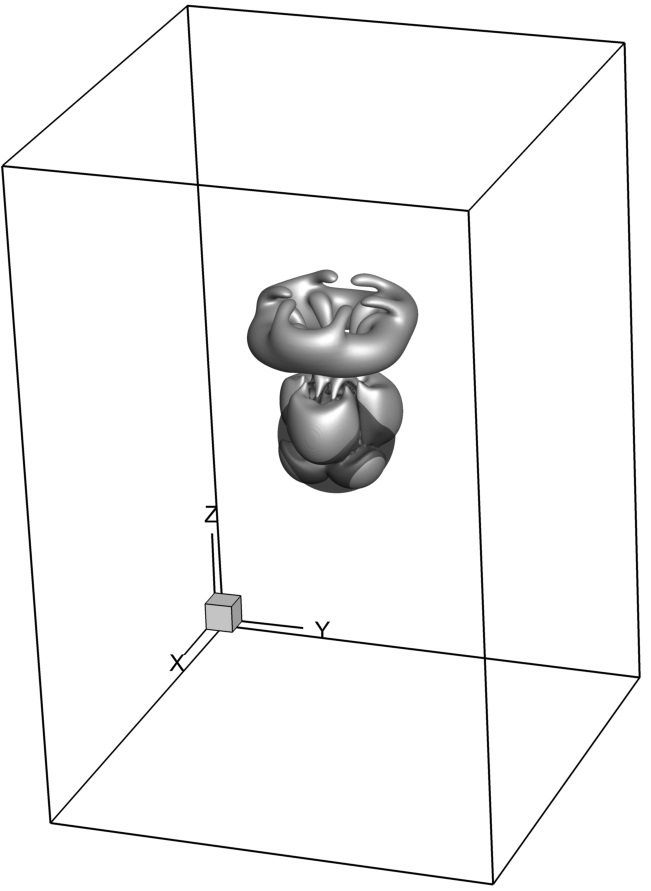}
    }
    \caption{Visualization of vortical structures (isosurfaces of $Q=0.05$) generated by a transversely oscillating sphere over a single oscillation period, calculated for $Re=100$, $200$, and $300$ at $A/D=1$. The bounding sphere containing 7 sub-spheres is indicated by a semitransparent surface.}
    \label{fig:Q_7_Sub-Spheres}
\end{figure}

Looking at Fig. \ref{fig:DragCoefficient}, we observe three key trends in the drag coefficient evolution.
First, the magnitude of $C_D$ systematically decreases with increasing Reynolds numbers, for both 7 and 14 sub-sphere arrays.
This trend directly reflects diminishing viscous effects at higher $Re$ values, resulting in reduced drag forces.
Second, the 14 sub-sphere configuration exhibits consistently higher $C_D$ values compared to the 7 sub-sphere array across all Reynolds numbers, approaching the drag coefficient values that characterize a non-porous sphere, as reported in our previous work (see \cite{sela2021semi}, Fig. 7).
This behavior is expected, as increasing the number of sub-spheres reduces the array's porosity, making it more similar to a solid sphere.
Third, a persistent phase lag exists between the sphere position and drag coefficient evolution.
This phase-lag arises from the fundamental relationship between force and motion in oscillatory flows:
the drag force is primarily dependent on the velocity rather than the displacement.
At low $Re$ values, the force reaches its maximum when the velocity is highest (at zero displacement) and its minimum when the velocity is zero (at maximum displacement), creating an inherent quarter-period phase shift.
This kinematic relationship persists across all $Re$ values, with additional inertial effects becoming more pronounced at higher Reynolds numbers but not altering the basic phase relationship.
Note that a similar phase-lag is present in Fig. \ref{fig:drag_time} from the previous subsection, corresponding to a single oscillating sphere.

\begin{figure}
\centering
    \subfigure[Isosurfaces of the $Q=0.05$ criterion for $Re=100$, calculated at different points of the sphere's trajectory: the lowest point, mid-point on the way up, highest point and mid-point on the way down (from left to right)]{
        \includegraphics[width=0.22\textwidth,clip=]{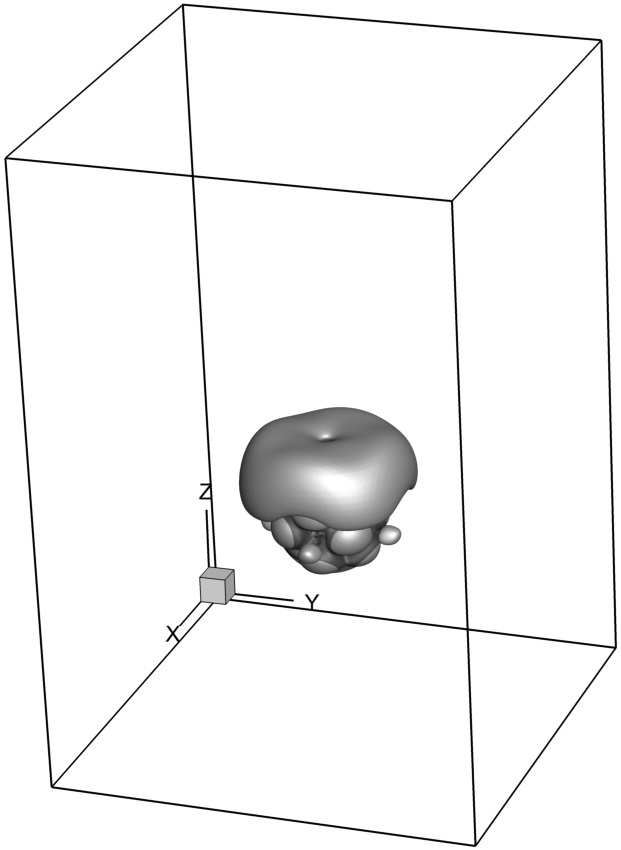}
        \hspace{10pt}
        \includegraphics[width=0.22\textwidth,clip=]{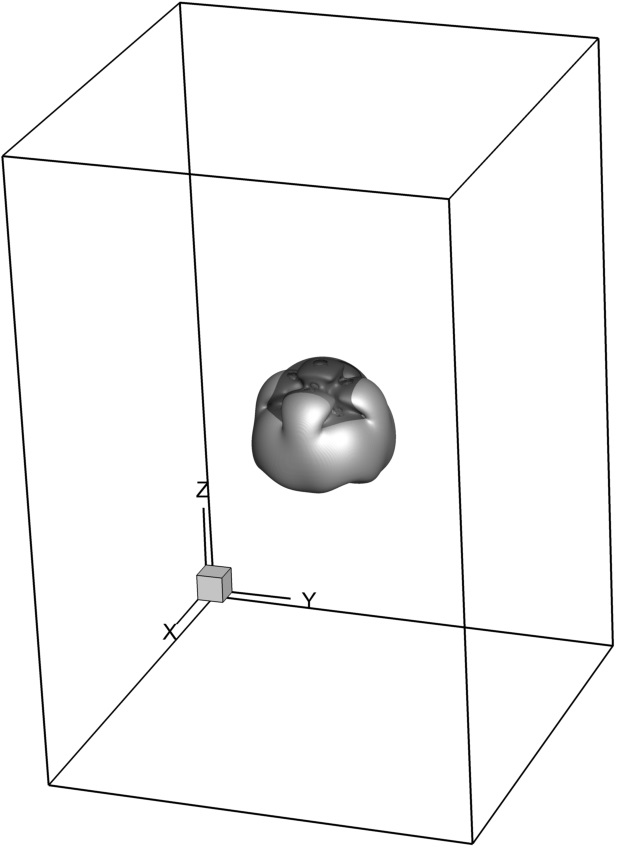}
        \hspace{10pt}
        \includegraphics[width=0.22\textwidth,clip=]{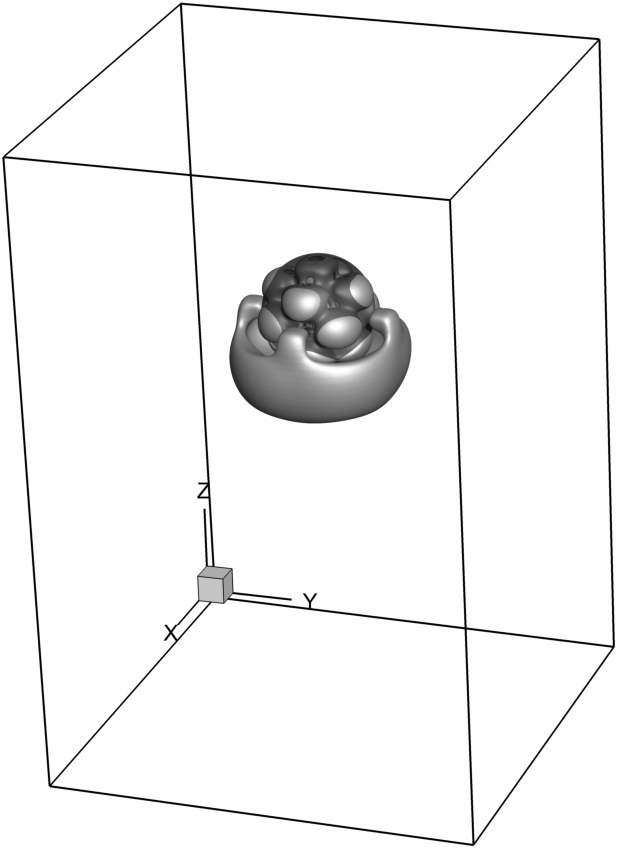}
        \hspace{10pt}
        \includegraphics[width=0.22\textwidth,clip=]{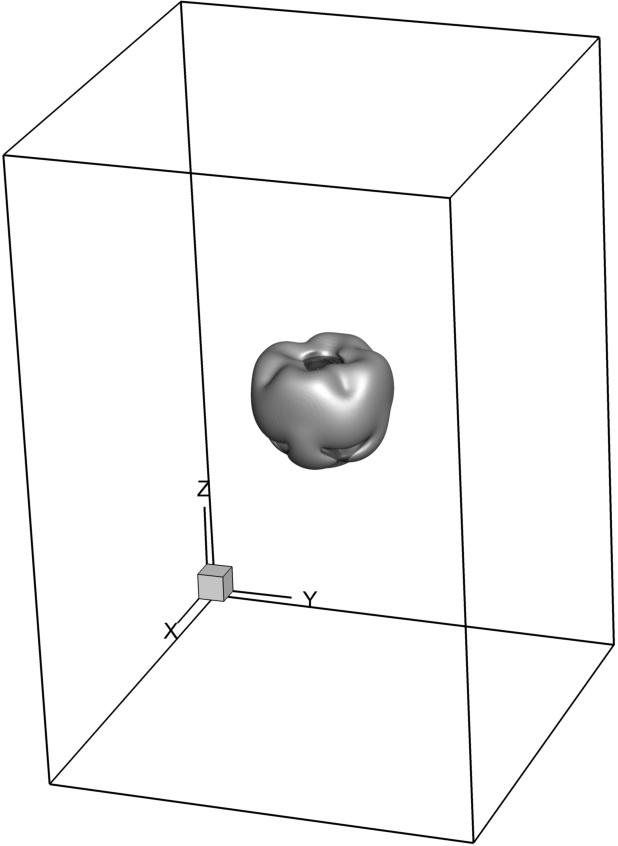}
    }
    \subfigure[Isosurfaces of the $Q=0.05$ criterion for $Re=200$, calculated at different points of the sphere's trajectory: the lowest point, mid-point on the way up, highest point and mid-point on the way down (from left to right)]{
        \includegraphics[width=0.22\textwidth,clip=]{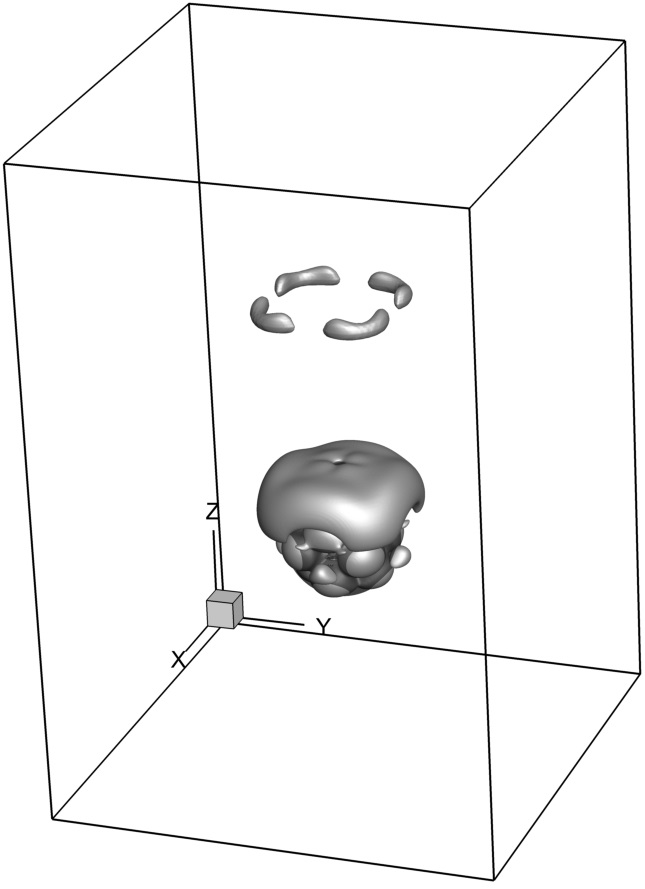}
        \hspace{10pt}
        \includegraphics[width=0.22\textwidth,clip=]{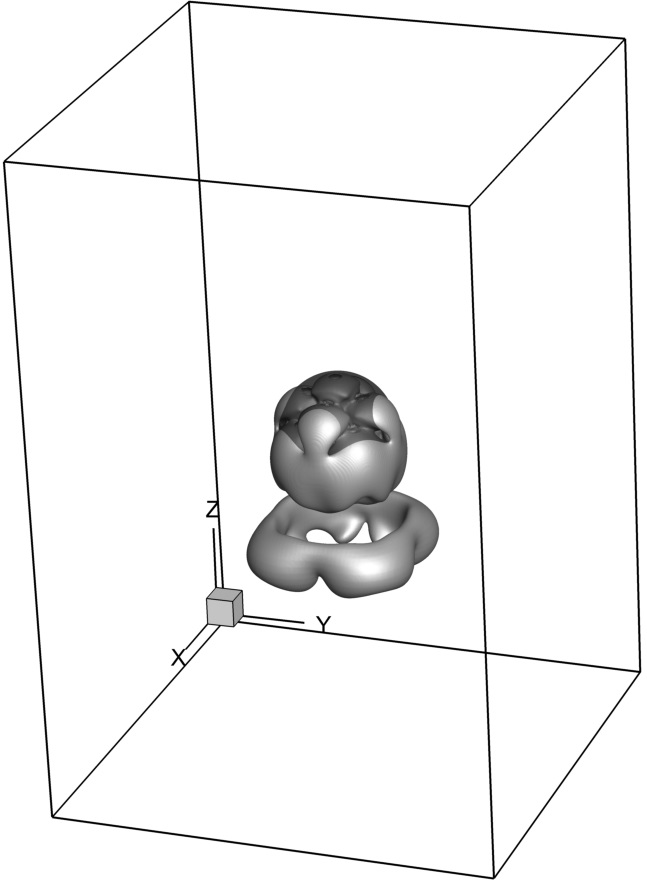}
        \hspace{10pt}
        \includegraphics[width=0.22\textwidth,clip=]{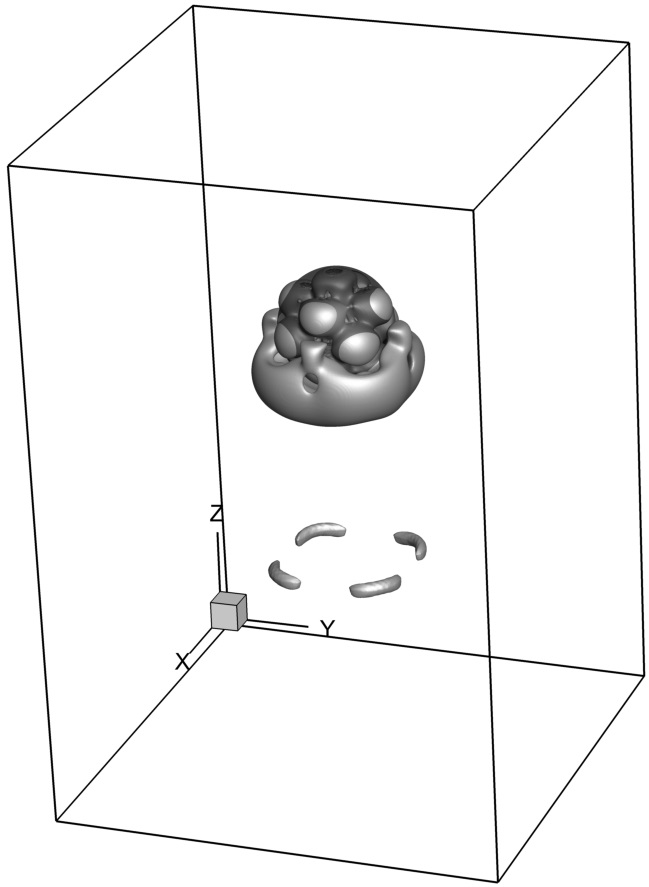}
        \hspace{10pt}
        \includegraphics[width=0.22\textwidth,clip=]{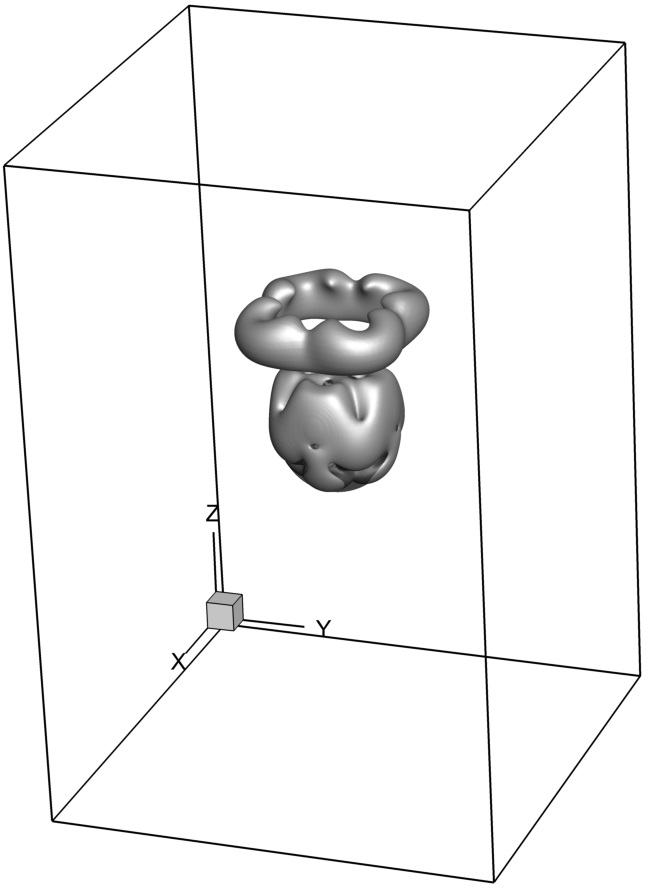}
    }
    \subfigure[Isosurfaces of the $Q=0.05$ criterion for $Re=300$, calculated at different points of the sphere's trajectory: the lowest point, mid-point on the way up, highest point and mid-point on the way down (from left to right)]{
        \includegraphics[width=0.22\textwidth,clip=]{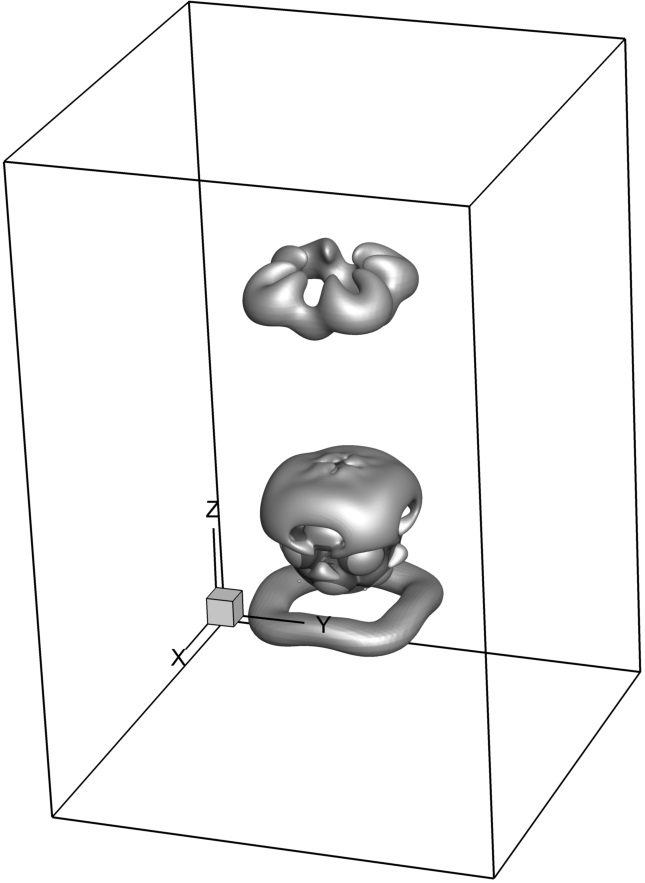}
        \hspace{10pt}
        \includegraphics[width=0.22\textwidth,clip=]{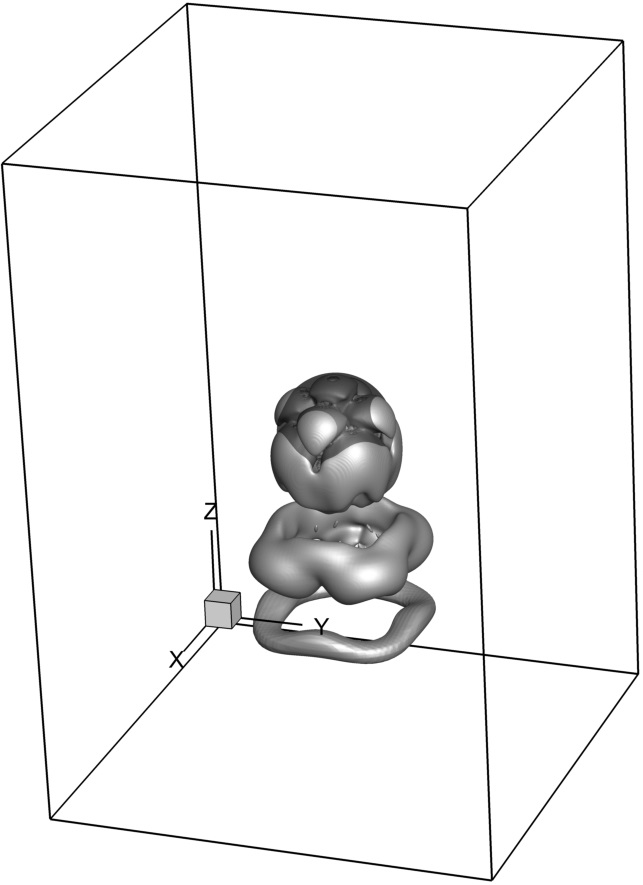}
        \hspace{10pt}
        \includegraphics[width=0.22\textwidth,clip=]{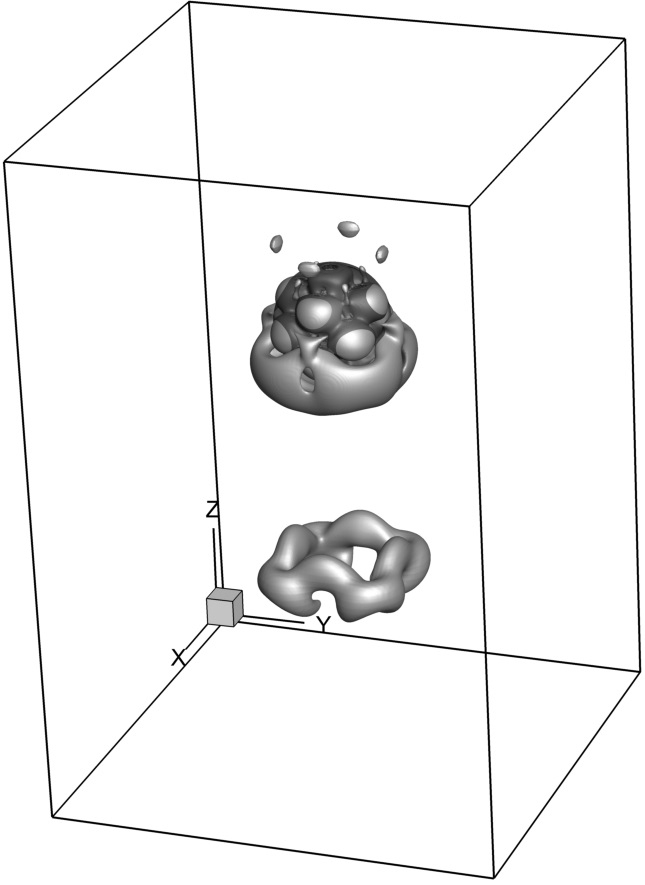}
        \hspace{10pt}
        \includegraphics[width=0.22\textwidth,clip=]{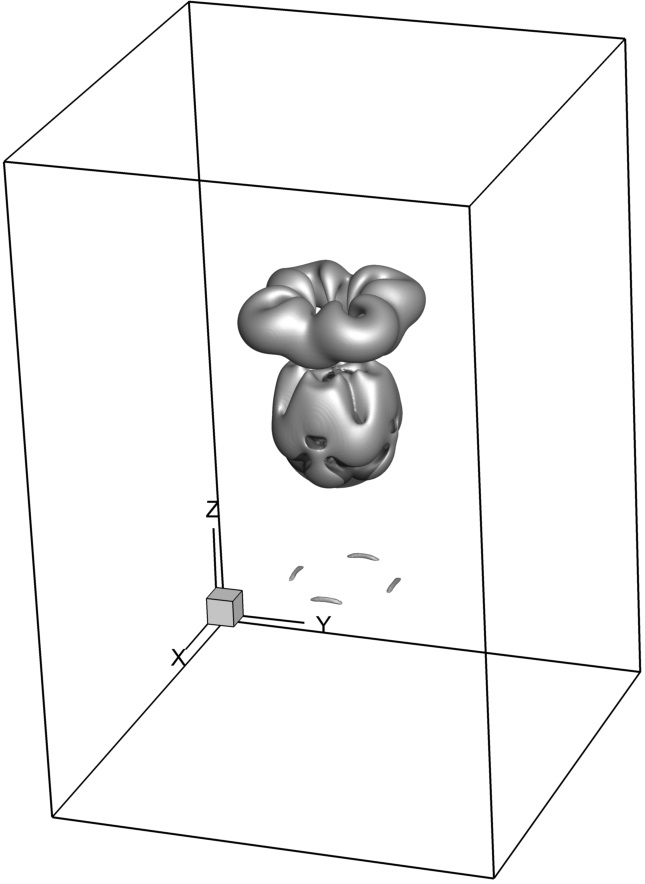}
    }
    \caption{Visualization of vortical structures (isosurfaces of $Q=0.05$) generated by a transversely oscillating sphere over a single oscillation period, calculated for $Re=100$, $200$, and $300$ at $A/D=1$. The bounding sphere containing 14 sub-spheres is indicated by a semitransparent surface.}
    \label{fig:Q_14_Sub-Spheres}
\end{figure}

To gain deeper insight into the fluid dynamics associated with the transverse motion of porous spheres, we visualize the vortical structures generated by the flow.
The visualization utilizes isosurfaces of the $Q$-criterion ($Q=0.05$), where $Q$ represents the second invariant of the velocity gradient tensor.
This value of $Q$ was chosen to effectively capture coherent vortical structures while filtering out weaker vorticity regions, thereby highlighting the dominant flow features.
The evolution of vortical structures over one oscillation period is shown in Figs. \ref{fig:Q_7_Sub-Spheres} and \ref{fig:Q_14_Sub-Spheres} for configurations containing 7 and 14 sub-spheres, respectively.
In both configurations, a clear Reynolds number dependence is observed in the persistence of vortical structures.
At $Re=100$, vortices dissipate rapidly due to dominant viscous diffusion.
As  $Re=300$, vortical structures maintain their coherence over longer periods and travel further from the sphere surface before dissipating, reflecting the diminishing role of viscous effects relative to inertial forces.

A notable distinction between the two configurations lies in their flow penetration characteristics.
The 7 sub-sphere array, with its higher porosity, exhibits more pronounced flow penetration between sub-spheres, manifested in elongated vortical structures that extend through the array (Fig. \ref{fig:Q_7_Sub-Spheres}).
In contrast, the 14 sub-sphere configuration shows more compact vortical structures that primarily develop around the array's outer boundary (Fig. \ref{fig:Q_14_Sub-Spheres}), consistent with its lower porosity.
This difference is particularly evident at $Re=300$, where the vortical structures in the 7 sub-sphere case show distinct columnar formations that are absent in the denser 14 sub-sphere arrangement.
Additionally, both configurations demonstrate asymmetric vortex-shedding patterns between upward and downward motions, with more intense vortex formation occurring during direction reversal at the trajectory extrema.
This asymmetry becomes more pronounced with increasing Reynolds number, particularly visible in the higher resolution structures at $Re=300$.

\subsubsection{Model problem 3: sedimentation and buoyant rise of a spherical particle (two-way coupling)}
\label{subsubsec:Sedimentation}

To further validate the accuracy and robustness of our numerical solver, we now consider fully two-way coupled FSI problems involving the sedimentation and buoyancy-driven motion of spherical particles in a quiescent fluid. These scenarios serve as a natural extension to the one-way coupled validation cases previously presented and pose additional numerical challenges due to the strong mutual interaction between the fluid and the solid dynamics.
In our first numerical experiment, we simulate the free sedimentation of a spherical particle in a rectangular domain of size $10D \times 10D \times 15D$, where $D$ denotes the particle diameter. The particle is initially positioned along the vertical centerline at a height of $8.5D$ and released from rest. Free-slip boundary conditions were imposed on all faces of the computational domain to minimize wall effects and allow the particle to evolve as in an unbounded domain. Two grid resolutions, $200 \times 200 \times 300$ and $400 \times 400 \times 600$, were tested to assess sensitivity to spatial discretization. As shown in Fig.~\ref{fig:vel_coord_time}, the discrepancy between the solutions obtained on the coarse and fine grids does not exceed 7\%, indicating satisfactory grid convergence. The simulated particle trajectories and velocities compare favorably with the experimental results reported by ten Cate et al.~\cite{tenCate2002settling}.
Since the fluid and solid dynamics are fully coupled, the simulations were carried out using the internal predictor-corrector iteration scheme described in Section~\ref{sec:internal_iterations}, which is required to enforce the no-slip and divergence-free constraints at the fluid-solid interface. The iteration continues until the relative $\ell_1$-norm of the incremental force-density correction falls below a prescribed threshold, typically $\varepsilon = 10^{-3}$. For all cases reported in this section, convergence was achieved within $2$ to $3$ iterations per time step. Importantly, the tested cases correspond to near-neutral buoyancy conditions, with density ratios $\rho_p/\rho_f \approx 1$, which are challenging due to the breakdown of the assumption that the fluid within the immersed body moves as a rigid body. In such regimes, the added-mass contributions must be explicitly captured for accurate prediction, further validating the correctness of our formulation.

\begin{figure}[htbp]
\centering
\subfigure[Time evolution of the vertical velocity]{
    \includegraphics[trim = 2cm 5cm 2cm 5cm, clip, width=0.45\textwidth]{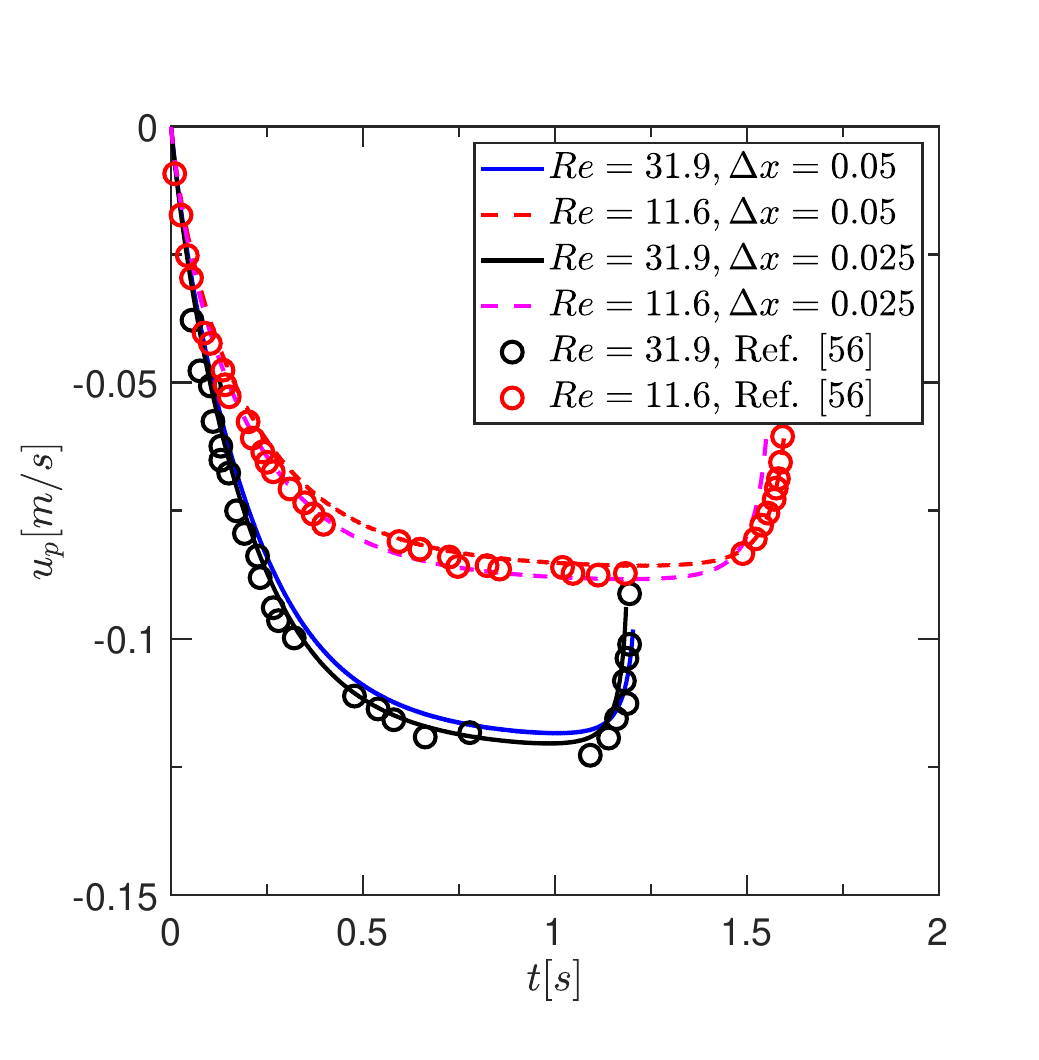}
    \label{fig:velocity_evol}
}
\subfigure[Time evolution of the vertical coordinate]{
    \includegraphics[trim = 2.1cm 5.2cm 2.1cm 6cm, clip, width=0.45\textwidth]{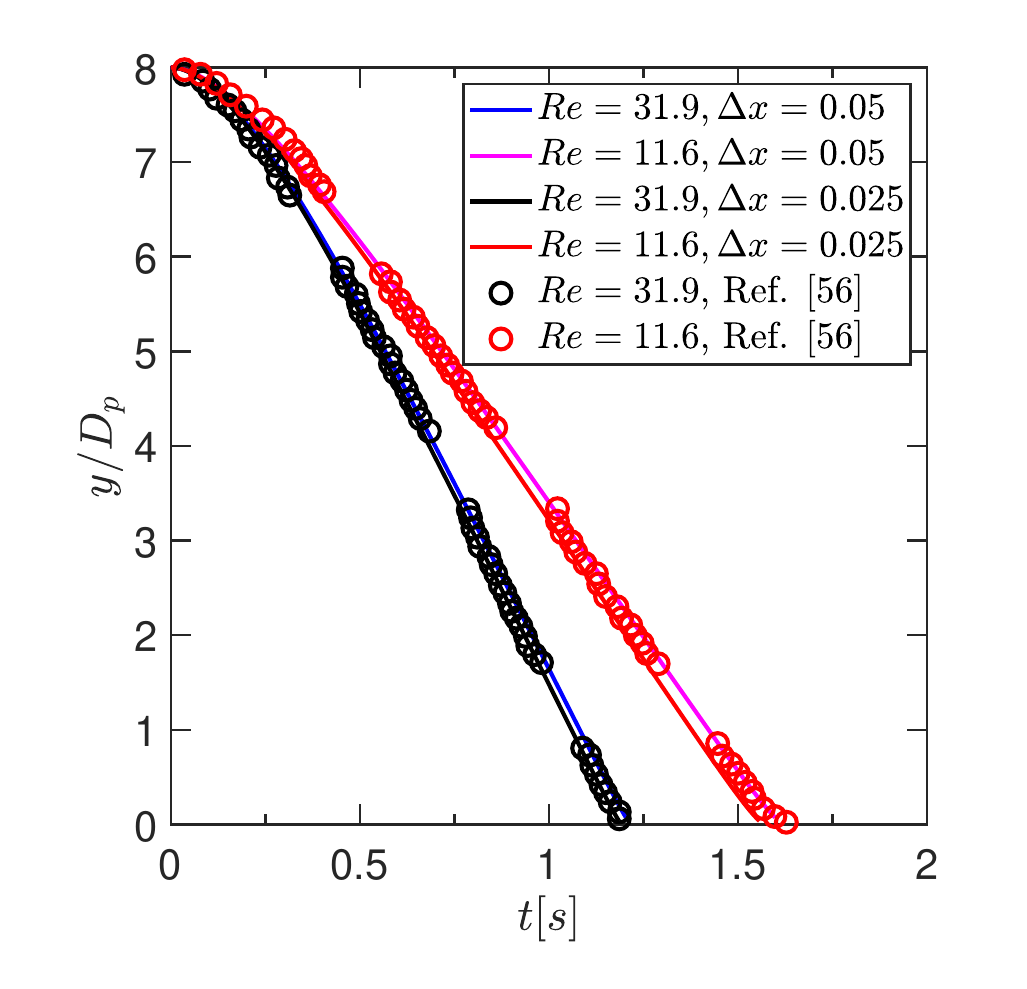}
    \label{fig:vert_coord_evol}
}
\caption{Sedimentation of a spherical particle within a quiescent fluid: present results for $\rho_p/\rho_f = 1.164$, $Re_p = 11.6$, $Fr = 0.237$ and $\rho_p/\rho_f = 1.167$, $Re_p = 31.9$, $Fr = 0.334$, compared against the experimental data of \cite{tenCate2002settling}.}
\label{fig:vel_coord_time}
\end{figure}

We further extend our validation by simulating additional two-way coupled cases involving both sedimenting and buoyant spherical particles. The computed vertical velocities are normalized using $u_{\mathrm{ref}} = \sqrt{gD_p}$ and $t_{\mathrm{ref}} = \sqrt{D_p/g}$ for velocity and time, respectively. The results are compared against both experimental data~\cite{mordant2000velocity} and previous numerical simulations~\cite{kempe2012improved}, as shown in Fig.~\ref{fig:coord1_coord2_time}. Excellent agreement is observed across entire range of density ratios and Reynolds numbers, further successfully verifying  our two-way coupled immersed boundary solver.

\begin{figure}[htbp]
\centering
\subfigure[Present results (solid lines); experimental data from \cite{mordant2000velocity} (pluses); simulation results from \cite{kempe2012improved} (circles). Case: $\rho_p/\rho_f=2.56$, $Re_p=367$, $Fr=1.8$.]{
    \includegraphics[width=0.45\textwidth]{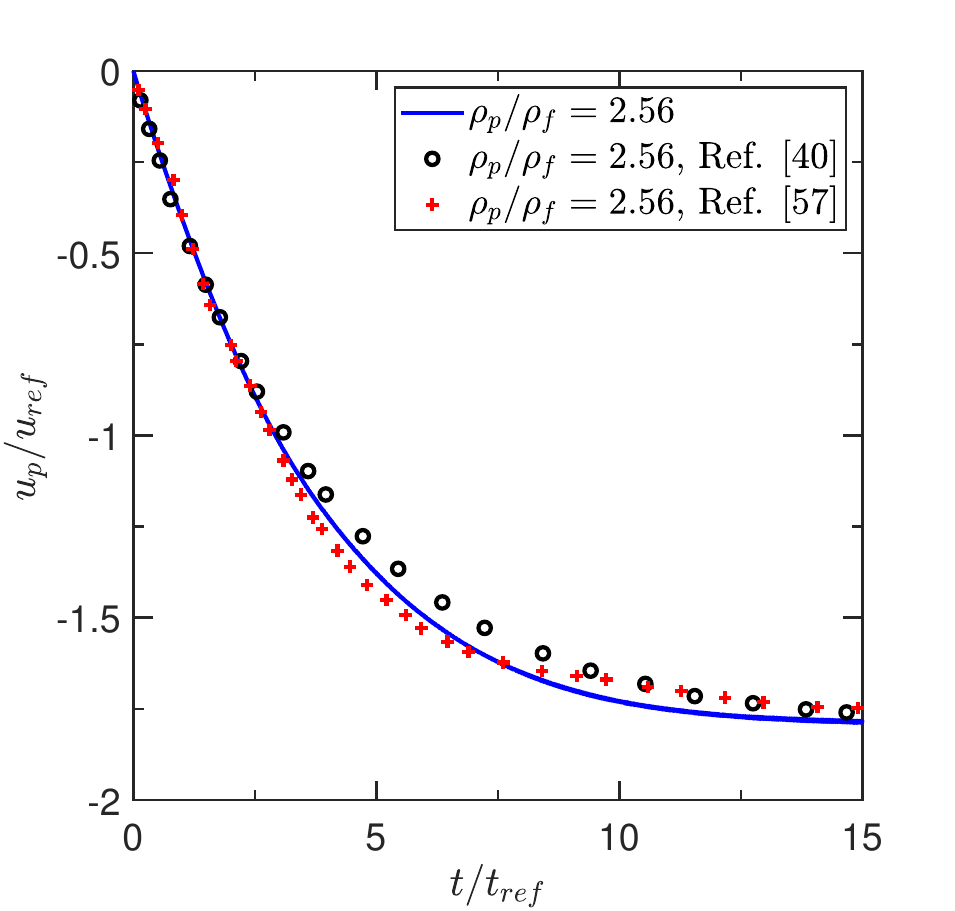}
    \label{fig:vert_vel_evol_1}
}
\hspace{10pt}
\subfigure[Present results (solid and dashed lines) compared to numerical results from \cite{kempe2012improved}. Case: $Re_p=367$, $Fr=1.8$.]{
    \includegraphics[width=0.45\textwidth]{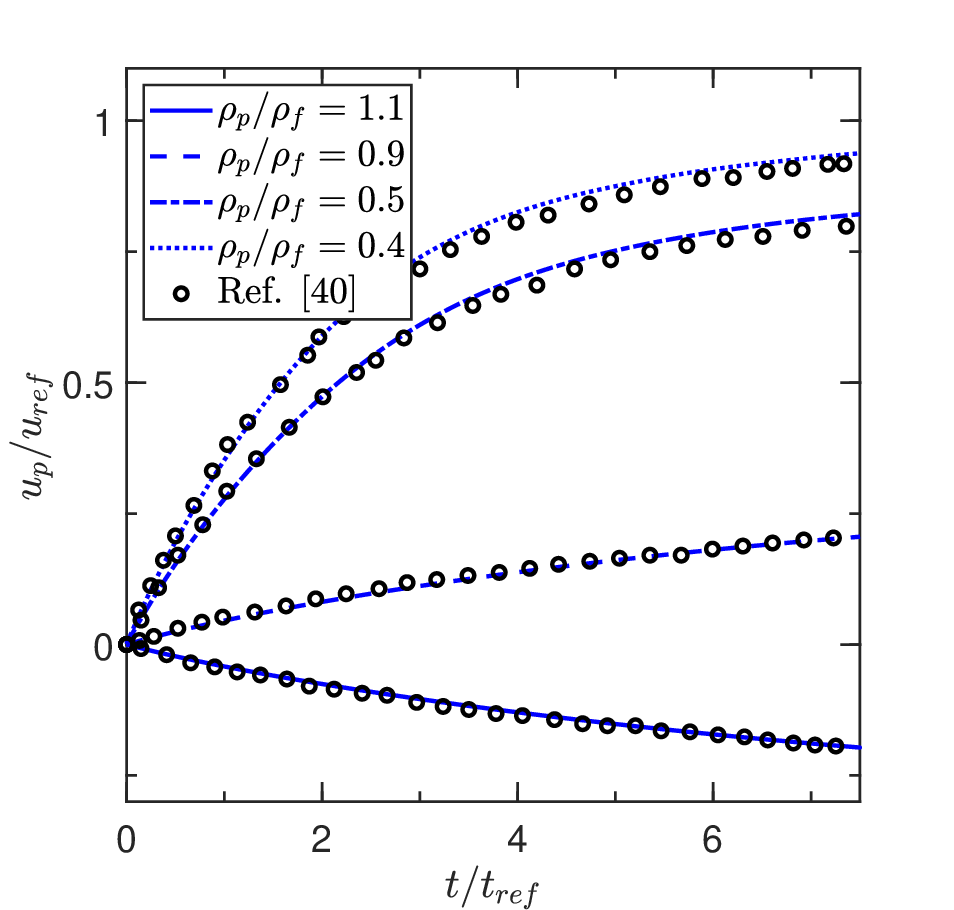}
    \label{fig:vert_vel_evol_2}
}
\caption{Time evolution of the vertical velocity of sedimenting and buoyant spherical particles in quiescent fluid: comparison with prior experimental and numerical data.}
\label{fig:coord1_coord2_time}
\end{figure}

\subsection{Preconditioner efficiency}
\label{subsec:efficiency}

The aim of this subsection is to validate the theoretical result of Theorem \ref{thm:thm} numerically, and to demonstrate that the low iteration count needed for the solution of the preconditioned system is nearly constant, for different scenarios.

In the first experiment, we verify the spectral properties of the preconditioned system on a 2D framework consisting of a stationary cylinder placed at the center of a lid-driven cavity.
In Table \ref{tab:PrecResults},
we measure the spectrum's support $[\lambda_{min},\lambda_{max}]$ and show that $\text{spec}(L^{-1}S_p)\subseteq[1,2]$, as predicted theoretically.
Since the scalar approximation of $C$ is crucial for the efficiency of our method, we also measure the spectrum's support for $L^{-1}\tilde{S}_p$, with $\tilde{S}_p$ from Eq. \eqref{eq:tildeSp}.
In this case, most of the eigenvalues satisfy $\lambda\in[1,2]$, and although some eigenvalues slightly exceed 2, the maximal eigenvalue tends to 2 as the grid resolution increases.
It can be explained by that fact that the quality of the scalar approximation $C\approx \frac{1}{2}I$ improves with the grid resolution, as shown in \cite{goncharuk2023immersed}.

\begin{table}
\centering
\begin{tabular}{c|cccc}
\hline
  \toprule
  \mc{5}{c}{Spectrum's support for the preconditioned system}\\
  \midrule
    Grid size (cells)  & $24\times24$ &  $48\times48$ & $96\times96$ & $192\times192$ \\
  \midrule
$\text{spec}(L^{-1}S_p)\subseteq$ &  [1,1.9997] & [1,1.9997] & [1,1.9996]  & [1,1.9996] \\
$\text{spec}(L^{-1}\tilde{S}_p)\subseteq$ &  [1,1.9944] & [1,1.9986] & [1,2.0099]  & [1,2.0067] \\
  \bottomrule
 \end{tabular}
\caption{The spectrum's support of the preconditioned system $L^{-1}S_p$ and the corresponding system with an approximate Schur complement $L^{-1}\tilde{S}_p$, obtained for a stationary circular cylinder placed at the center of a 2D lid-driven cavity.}
\label{tab:PrecResults}
\end{table}

In the second experiment
we compare the GMRES iteration count required for the solution of the original system
$\tilde{S}_p p' = \bfb$
and the preconditioned system
$L^{-1}\tilde{S}_p p' = L^{-1}\bfb$.
Note that during a full solution of the problem, the right hand side $\bfb$ changes at any time step.
Here, we solved the systems for a random right hand side to demonstrate the improvement of the conditioning in terms of iteration count.
The results shown in Fig. \ref{fig:iterations2D} demonstrate that the preconditioned system can be solved in just a few iterations regardless of the grid size, while the original system requires hundreds of iterations for very small grids and thousands for larger grids.
This scalability property enables performing the simulations on dense 3D grids, as shown in the next experiment.

\begin{figure}
\centering
\includegraphics[width=0.47\textwidth]{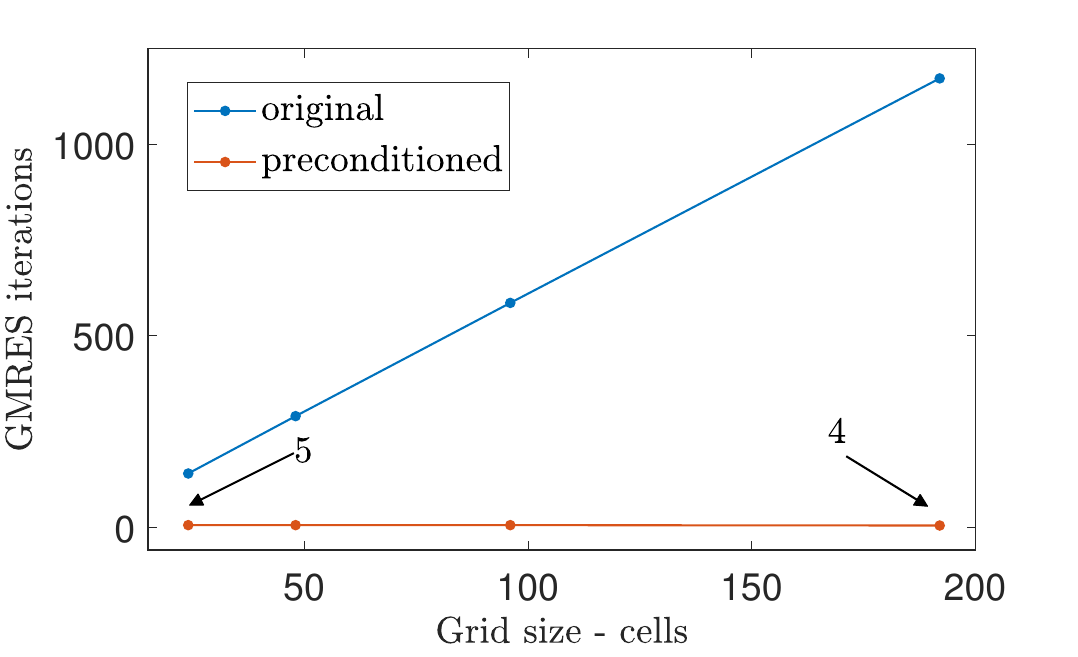}
\caption{GMRES iteration count for the solution of a system with a random right-hand-side for the corresponding matrices.}
\label{fig:iterations2D}
\end{figure}

In the third experiment we verify that the scalability shown above extends to 3D experiments with moving bodies, by demonstrating it on model problem 1 from Subsection \ref{subsubsec:OscillatingSphere}: an oscillating sphere.
We evaluate the efficiency of the proposed preconditioner by measuring the average number of BiCGStab iterations required to converge the preconditioned system to a tolerance of $10^{-12}$ for various Reynolds numbers and grid sizes. Table~\ref{tab:3dResults} summarizes these results, averaged over three amplitude-to-diameter ratios ($A/D = 0.5,\ 1.0,\ 1.5$) and four representative locations of the sphere along its periodic trajectory: bottom dead center, top dead center, and the midpoints during upward and downward motion. The results demonstrate that the average iteration count remains nearly constant—typically between 4 and 5—across all tested Reynolds numbers, grid sizes, and sphere locations. This robustness highlights the effectiveness and scalability of the preconditioner.

\begin{table}[H]
\centering
\begin{tabular}{c|cccccc}
\hline
  \toprule
  \mc{7}{c}{Iteration count for a 3D oscillating sphere}\\
  \midrule
Grid size & $Re=10$ & $Re=20$  & $Re=50$ & $Re=100$ & $Re=200$ & $Re=300$ \\
  \midrule
$50 \times 50 \times 75$ & 4.0 & 4.0 & 4.0 & 4.0 &  4.0 & 4.0 \\
$100 \times 100 \times 150$ & 4.25 & 4.0 & 4.0 & 4.17 &  4.17 & 4.0 \\
$200 \times 200 \times 300$ & 4.33 & 4.25 & 4.17 & 4.33 & 4.33  & 4.67 \\
$400 \times 400 \times 600$ & 4.17 & 4.0 & 4.08 & 4.08 & 4.33 & 4.25 \\
  \bottomrule
 \end{tabular}
\caption{Average BiCGStab iteration count required for convergence of the preconditioned system to a $10^{-12}$ tolerance. Results are averaged over different $A/D$ ratios and over four representative locations of the sphere along its oscillation cycle.}
\label{tab:3dResults}
\end{table}

To further assess the computational performance, we measured the wall-clock time per time step and memory consumption as functions of grid size. These experiments were conducted on a standard Linux server equipped with 64 GB DDR3 shared memory and two Intel Xeon 12C processors (48 threads total). The results, shown in Fig.~\ref{fig:time_memory}, were averaged over the same set of $A/D$ ratios and six Reynolds numbers ($Re = 10,\ 20,\ 50,\ 100,\ 200,\ 300$).
The measured wall-clock time scales approximately as $O(N^{1.1})$, where $N$ is the number of unknowns. This scaling aligns well with the expected complexity of the direct solver used in our implementation estimated as $O(N^{1.33})$ in \cite{lynch1964direct, vitoshkin2013direct}, which is further  improved through the use of the Thomas algorithm in one coordinate direction. Memory consumption is sub-linear and remains modest: even for the largest grid ($400 \times 400 \times 600$, hosting approximately 384 million unknowns), the total memory requirement was only 25 GB. This is a significant improvement over previous work \cite{sela2021semi}, where memory consumption exceeded 128 GB for comparable problem sizes at low Reynolds numbers. Importantly, the computational efficiency and memory consumption are nearly independent of the Reynolds number and time step, provided that the CFL number remains below 0.1.

\begin{figure}[H]
\begin{center}
	\newcommand{\image}[1]{\includegraphics[width=0.47\linewidth]{./#1}}
    \subfigure[Time scalability]{\image{time.eps}\label{fig:time}}
    \hspace{20pt}
    \subfigure[Memory scalability]{\image{memory.eps}\label{fig:memory}}\\
\end{center}
\caption{Wall-clock time per time step and memory consumption as functions of grid size. Results are averaged over three $A/D$ ratios ($0.5,\ 1.0,\ 1.5$) and six Reynolds numbers ($10$ to $300$). The solid line represents a linear regression trend.}
\label{fig:time_memory}
\end{figure}

In the last experiment, we demonstrate the same scalability property on model problem 2 from Subsection \ref{subsubsec:PorousSphere}, which contains multiple spheres.
In the view of Remark \ref{remark:convergenceRate},
the number of immersed bodies might affect the sparsity of $B$, and thus the scattering of the eigenvalues of the preconditioned system.
The aim of this experiment is to demonstrate that the same scalable behaviour holds, regardless of the number of bodies.
Table \ref{tab:3dPorous} shows the BiCGStab iteration count needed to solve Eq. \eqref{eq:SchurPrimalP} with preconditioning, averaged over a single oscillation period, for different Reynolds numbers (Re=50, 100, 200, and 300) and grid resolutions.
The results show that this period-averaged iteration count remains nearly constant, requiring 4-5 iterations for both 7 and 14 sub-sphere configurations.
Furthermore, these period-averaged iteration counts match the results in Table \ref{tab:3dResults} for a single oscillating sphere, demonstrating consistent performance across different configurations.
These numerical results, combined with Theorem \ref{thm:thm}, demonstrate our method's applicability and scalability across a wide range of applications.

\begin{table}[H]
\centering
\begin{tabular}{c|cccc|cccc}
\hline
  \toprule
  \mc{9}{c}{Iteration count for a 3D porous sphere}\\
  \midrule
   & \mc{4}{c|}{7 sub-spheres configuration} & \mc{4}{c}{14 sub-spheres configuration} \\
\small Grid size & \small $Re=50$ & \small $Re=100$  & \small $Re=200$ & \small $Re=300$ & \small $Re=50$ & \small $Re=100$ & \small $Re=200$ & \small $Re=300$ \\
  \midrule
\small $100 \times 100 \times 150$ & 4.59 & 4.48 & 4.66 & 4.64 & 4.74 & 4.74 & 4.79 & 4.76 \\
\small $200 \times 200 \times 300$ & 4.94 & 4.96 & 4.95 & 4.95 & 4.98 & 4.98 & 4.98 & 4.93 \\
\small $400 \times 400 \times 600$ & 4.81 & 4.82 & 4.82 & 4.87 & 4.97 & 4.98 & 4.98 & 5.00 \\
  \bottomrule
 \end{tabular}
\caption{Period-averaged BiCGStab iteration count for solving the preconditioned system at different Reynolds numbers and grid resolutions, with convergence tolerance of $10^{-12}$. Results shown for configurations with 7 and 14 sub-spheres, for $A/D=1$.}
\label{tab:3dPorous}
\end{table}

\section{Conclusion and future work}\label{sec:conclusion}

We present a novel, preconditioned  direct-forcing IBM for efficient moving boundary and two-way coupled FSI simulations, building upon the SIMPLE-IBM framework~\cite{goncharuk2023immersed}. 
To address the coupling between pressure and force-density corrections, we employ a block reduction technique that solves for the pressure correction first, using the Laplacian operator as a preconditioner for the resulting primal Schur complement. 
We rigorously prove and numerically verify that the Laplacian is spectrally equivalent to the primal Schur complement. 
This key observation enables an efficient and robust solution strategy: 
since the pressure Laplacian is independent of the immersed body configuration, it can be factorized once using a fast direct solver~\cite{lynch1964direct} and reused throughout the simulation.

We demonstrate both theoretically and numerically that the performance of our solver is independent of grid resolution and the number of moving bodies, ensuring excellent scalability in terms of computational time and memory consumption. 
As a result, the method maintains low memory requirements, enabling high-fidelity moving boundary simulations that would typically require high-performance computing to run efficiently on standard workstations or even laptops. 
This improves the accessibility of advanced immersed boundary simulations for the broader CFD community. 
In this work, we demonstrated the applicability of the method to both moving boundary and two-way coupled FSI problems. 
The methodology was successfully verified by simulating flows induced by a transversely oscillating sphere, as well as sedimenting and buoyant spherical particles, showing good agreement with experimental and numerical reference data across a wide range of operating parameters. 
The approach was then extended to more complex configurations involving multiple moving bodies, specifically porous spheres modeled as arrays of rigid sub-spheres. 
The simulations revealed distinct flow features associated with different array porosities, including variations in drag coefficients, phase shifts, and vortex evolution,  while preserving numerical stability and accuracy.

Future research should focus on extending the method to more complex two-way coupled FSI problems, particularly for simulating undulatory locomotion and deformable mesoscale porous media. 
The efficiency of our preconditioner does not depend on the specific choice of discrete delta function, which opens opportunities to improve accuracy by using wider-support kernels~\cite{stein2017immersed}. 
However, since our method relies on a scalar approximation of the regularization block, a key challenge will be to identify discrete delta functions with sufficiently wide support for which this scalar approximation remains valid.

Finally, the current implementation employs a direct Poisson solver~\cite{lynch1964direct} designed for homogeneous boundary conditions. 
While this is not restrictive in our context, where the pressure correction equation naturally adopts such conditions, future extensions may benefit from replacing the direct solver with a geometric multigrid V-cycle~\cite{brandt1977multi}. 
Such a strategy, if properly optimized, may achieve linear-time scaling and further enhance the versatility and accuracy of our approach for a wider class of FSI problems.

\section*{CRediT authorship contribution statement}

Rachel Yovel: analysis and related examples, validation, writing.

Eran Treister: supervision, validation, writing.

Yuri Feldman: supervision, software, validation, writing.

\section*{Declaration of competing interest}
The authors declare that they have no known competing financial interests or personal relationships that could have appeared to influence the work reported in this paper.

\section*{Funding}
This research was supported by the Israel Science Foundation, GrantsNo. 656/23, 2746/25, and by the Lynn and William Frankel Center for Computer Science at BGU. The second author is also supported by the Ariane de Rothschild scholarship and by the Kreitman High-tech scholarship. The third author is supported by the Israel Ministry of Energy and Infrastructure (grant No. 222-11-049).

\appendix

\section{Grid-spacing effects on the $R^T R$ matrix--vector product approximation}
\label{A1}
We investigate the effect of the relative spacing between the Eulerian and Lagrangian grids on the validity of the $R^T R$ matrix--vector product approximation by a scaled diagonal matrix. This approximation, based on the assumption of a gradually varying distribution of Lagrangian forces for sufficiently smooth geometries, as established in our previous study~\cite{goncharuk2023immersed}, is critical for achieving high computational efficiency of the developed solver. The purpose is to investigate to what extent the validity of this assumption depends on the overlap of the influence domains of the discrete Dirac delta function~\cite{roma1999adaptive} currently utilized in both the regularization and interpolation operators.

\subsection{Numerical setup}

The study was conducted using the same physical and numerical configuration as in Section~\ref{subsubsec:Sedimentation}, namely sedimentation of a spherical particle in a viscous fluid. The Eulerian grid resolution was fixed to $200 \times 200 \times 400$, while the spacing of the Lagrangian markers was varied systematically. The simulations were initialized with the particle at rest and advanced for the first 100 time steps. For each configuration, the following quantities were evaluated:

\begin{itemize}
  \item the ratio $\Delta h_{Eu}/\Delta h_L$ between the Eulerian and Lagrangian grid spacings,
  \item the minimum, maximum, and average row (or column) sums of the explicitly constructed matrix $R^T R$,
  \item convergence or divergence of the time integration,
  \item the value of sedimentation velocity $v_z$,
  \item the number of BiCG iterations required for convergence of the system governing the pressure and force-density corrections (see Eq.~\ref{eq:Saddle}).
\end{itemize}

\subsection{Stability, accuracy, and convergence}

The results of the convergence and stability study are summarized in Table~\ref{tab:RtR_spacing}. For dense Lagrangian discretizations ($\Delta h_{Eu}/\Delta h_L \geq 1.58$), the support of the discrete delta functions overlaps strongly, leading to near-singular behavior of the operator $R^T R$. In this regime, the approximation $(R^T R)^{-1} \approx 0.5 I$ is no longer valid, and the simulations diverge after a small number of time steps. 

For coarser Lagrangian grids ($\Delta h_{Eu}/\Delta h_L \leq 1.33$), the simulations remain stable. In this regime, the diagonal approximation provides an accurate representation of the matrix--vector product involving $(R^T R)^{-1}$. The accuracy of the obtained results was assessed by computing the relative deviation of the sedimentation velocity $v_z$ after 100 time steps with respect to a reference configuration characterized by approximately equal Eulerian and Lagrangian grid resolutions (highlighted in bold in Table~\ref{tab:RtR_spacing}). For all converged cases, the deviation of $v_z$ remained below approximately $2\%$, indicating that the diagonal approximation does not compromise accuracy provided that the Lagrangian spacing is neither excessively fine nor excessively coarse.

\begin{table*}[t]
\centering
\caption{Influence of the Eulerian--Lagrangian grid spacing on convergence and solution accuracy. The values of $[R^T R]$ (min, max, avg) are identical for both Reynolds numbers. For all converged cases, 4--5 BiCGStab iterations were required per time step. When divergence occurred, it typically led to fatal failures in satisfying the continuity equation and enforcing the no-slip kinematic constraint on the surface of the transversely oscillating sphere.}
\label{tab:RtR_spacing}
\renewcommand{\arraystretch}{1.2}
\begin{tabular}{c c c c c c c}
\hline
\multicolumn{3}{c}{\textbf{Re = 11.6}} &
\multicolumn{3}{c}{\textbf{Re = 31.9}} &
\multirow{2}{*}{\textbf{$[R^TR] (\min,\max,\mathrm{avg})$}} \\
\cline{1-6}
$\Delta h_{Eu}/\Delta h_L$ & Converged & $v_z$ &
$\Delta h_{Eu}/\Delta h_L$ & Converged & $v_z$ & \\

\hline
2.00 & $\times$ & $-0.05478$ &
2.00 & $\times$ & $-0.06166$ &
1.984, 2.026, 2.081 \\

1.58 & $\times$ & $-0.05520$ &
1.58 & $\times$ & $-0.06240$ &
1.230, 1.300, 1.264 \\

1.33 & $\checkmark$ & $-0.10978$ &
1.33 & $\checkmark$ & $-0.06215$ &
0.857, 0.936, 0.900 \\

1.143 & $\checkmark$ & $-0.11103$ &
1.143 & $\checkmark$ & $-0.06220$ &
0.634, 0.697, 0.662 \\

$\approx$\textbf{1.00} & $\checkmark$ & $\boldsymbol{-0.11156}$ &
$\approx$\textbf{1.00} & $\checkmark$ & $\boldsymbol{-0.06339}$ &
\textbf{0.486, 0.534, 0.506} \\

0.89 & $\checkmark$ & $-0.11169$ &
0.89 & $\checkmark$ & $-0.06344$ &
0.385, 0.424, 0.400 \\

0.80 & $\checkmark$ & $-0.11171$ &
0.80 & $\checkmark$ & $-0.06345$ &
0.313, 0.342, 0.324 \\
\hline
\end{tabular}
\end{table*}

For the two densest configurations, stability could be recovered by replacing the constant diagonal approximation with a rescaled form $(R^T R)^{-1} \approx (R^TR_{\mathrm{avg}})^{-1} I$. However, this correction was found to be effective only at higher Reynolds numbers, whereas at $Re = 11.6$ the relative error between the reference solution and the results obtained using the rescaled approximation reached approximately $50\%$. An important observation is that the performance of the preconditioner is essentially independent of the Lagrangian discretization. In all tested cases, including those that eventually diverged, the number of BiCG iterations required for convergence of the pressure–force correction system (Eq.~\ref{eq:Saddle}) remained between 4 and 5. These presented results define the accuracy limits for applying the diagonal approximation of the matrix–vector product $(R^T R)^{-1}$, and identify the practical range of applicability of the proposed method.

\section{Influence of boundary proximity on the preconditioner performance}
\label{A2}
To further assess the limits of applicability of the developed preconditioner, we examine its performance in configurations where the immersed body is located in close proximity to the computational boundary. Several points should be clarified in this context. First, in order to ensure physical consistency of the interpolation and regularization operators, the currently utilized discrete Dirac delta function must satisfy a number of fundamental properties, including consistent interpolation of uniform fields and conservation of linear and angular momentum of moving bodies (see, e.g., Roma~\cite{roma1999adaptive}). These properties require that the delta function be applied symmetrically with respect to the computational grid. In particular, the region to which a physical quantity is interpolated must be surrounded by Eulerian grid points extending at least over the full support of the discrete delta function. Conversely, when regularizing a Lagrangian quantity onto the Eulerian grid, the computational domain must extend by at least one delta-function support width beyond the location of the Lagrangian point.

For this reason, the application of direct-forcing immersed boundary methods incorporating symmetric delta functions requires special treatment in situations where an immersed body approaches a physical boundary or another solid object. A commonly adopted approach is the introduction of short-range conservative repulsion forces (see, e.g., \cite{maury1997manybody, glowinski1999dlm, glowinski1999fictitious}). These forces ensure that Lagrangian points do not approach each other more closely than approximately the support width of the corresponding delta function, thereby preserving the validity of the interpolation and regularization operators. An alternative approach, apparently providing a more physically consistent scenario, is to employ discrete delta functions with a one-sided kernel such as those proposed in \cite{vanella2009moving, haji2019mls, bale2021onesided}. In the present study, we restrict ourselves to the use of symmetric delta functions only.

When utilizing symmetric delta functions together with repulsion forces, there is no fundamental distinction between interactions of an immersed body with the boundary of the computational domain and interactions between two independently moving immersed bodies. In both cases, the repelling mechanism enforces a minimal separation distance. For this reason, explicit simulations of multiple independently moving bodies do not introduce additional conceptual challenges beyond those already present in the body–-boundary interaction case. Taking these considerations into account, we have chosen to focus on configurations in which an immersed body remains in close proximity to a boundary for an extended period of time, or, in the limiting case, throughout the entire simulation. To investigate this scenario, we therefore consider an additional test involving a transversely oscillating sphere whose surface is located at a distance of only two grid cells from the horizontal boundary. This distance corresponds to the minimal separation at which repulsion forces are not yet activated and at the same time guarantees the correct application of the discrete operators $G^T R$ and $R^T G$. All remaining governing parameters, including the Reynolds number and the ratio between the oscillation amplitude, $A$, and the particle diameter, $D$, are chosen to be identical to those used in subsection~\ref{subsubsec:OscillatingSphere} of the main text. All the velocity boundary conditions were set as no-slip.

The results of the period-averaged BiCGStab iteration count for different $Re$ and $A/D$ values are summarized in Table~\ref{tab:iter_single}. The table shows that the number of iterations of the preconditioned solver remains nearly constant, within the range of approximately 4 to 5, for the entire set of operating parameters. This behavior is consistent with that observed for configurations simulated far from boundaries in the main text of the paper. These results verify the efficiency of the developed preconditioned solver also for configurations involving body–-boundary interactions.

\begin{table*}[t]
\centering
\caption{Period-averaged BiCGStab iteration count for solving the preconditioned system at different Reynolds numbers and grid resolutions and $A/D$ ratios, with convergence tolerance of $10^{-12}$. Results shown for the configuration of a single sphere transversely oscillating in a proximity of vertical wall.}
\label{tab:iter_single}
\renewcommand{\arraystretch}{1.2}
\begin{tabular}{c c c c c c c c}
\hline
Amplitude ratio & Grid size & Re=10 & Re=20 & Re=50 & Re=100 & Re=200 & Re=300 \\
\hline
$A/D=0.5$ 
  & 50$\times$50$\times50$    & 4.02 & 4.00 & 4.00 & 4.01 & 4.00 & 4.00 \\
  & 100$\times$100$\times$150 & 4.08 & 4.11 & 4.07 & 4.04 & 4.05 & 4.09 \\
  & 200$\times$200$\times$300 & 4.63 & 4.46 & 4.24 & 4.24 & 4.27 & 4.37 \\
  & 400$\times$400$\times$600 & 4.56 & 4.58 & 4.55 & 4.41 & 4.22 & 4.17 \\
\hline
$A/D=1.0$ 
  & 50$\times$50$\times$50    & 4.01 & 4.01 & 4.00 & 4.00 & 4.00 & 4.00 \\
  & 100$\times$100$\times$150 & 4.22 & 4.15 & 4.13 & 4.14 & 4.24 & 4.26 \\
  & 200$\times$200$\times$300 & 4.81 & 4.70 & 4.51 & 4.44 & 4.56 & 4.70 \\
  & 400$\times$400$\times$600 & 4.20 & 4.48 & 4.51 & 4.54 & 4.38 & 4.73 \\
\hline
$A/D=1.5$ 
  & 50$\times$50$\times$50    & 4.01 & 4.01 & 4.00 & 4.00 & 4.00 & 4.00 \\
  & 100$\times$100$\times$150 & 4.18 & 4.17 & 4.11 & 4.17 & 4.29 & 4.23 \\
  & 200$\times$200$\times$300 & 4.61 & 4.61 & 4.61 & 4.62 & 4.61 & 4.61 \\
  & 400$\times$400$\times$600 & 4.29 & 4.65 & 4.63 & 4.64 & 4.69 & 4.78 \\
\hline
\end{tabular}
\end{table*}

\small
\bibliographystyle{elsarticle-num}
\color{black}

\end{document}